\documentclass[11pt,twoside]{article} 
\usepackage{euler,acta-info}
\usepackage[latin2]{inputenc}
\frenchspacing

\title{%
Testing of random matrices
}

\setcounter{page}{99}  

 
\newcommand{\koz}{\,}

\begin{document}
\maketitle

\twoauthors{\href{http://compalg.inf.elte.hu/tanszek/index.php}{Antal IV\'ANYI}}
{\href{http://www.elte.hu/en}{E\"otv\"os Lor\'and University} \\ \href{http://compalg.inf.elte.hu/tanszek/index.php?angolul=1}
{Department of Computer Algebra} \\ H-1117, Budapest, Hungary \\ P\'azm\'any s\'etány 1/C}
{\href{mailto:tony@compalg.inf.elte.hu}{tony@compalg.inf.elte.hu}}
{\href{http://compalg.inf.elte.hu/tanszek/index.php}{Imre K\'ATAI}}
{\href{http://www.elte.hu/en}{E\"otv\"os Lor\'and University}\\ \href{http://compalg.inf.elte.hu/tanszek/index.php?angolul=1}
{Department of Computer Algebra} \\ H-1117, Budapest, Hungary \\ P\'azm\'any s\'etány 1/C}  
{\href{mailto:katai@compalg.inf.elte.hu}{katai@compalg.inf.elte.hu}}


\short{%
A. Iv\'anyi, I. K\'atai
}{%
Testing of sequences by simulation}

\bigskip
\begin{abstract}
Let $n$ be a positive integer and $X = [x _{ij}]_{1 \leq i, j \leq n}$ 
be an $n \times n$\linebreak 
\noindent sized matrix of independent random variables having joint uniform distribution 
$$
\hbox{Pr} \{ x _{ij} = k \hbox{ for } 1 \leq k \leq n \} =  \frac{1}{n}  \quad (1 \leq i, j \leq n) \koz.
$$
A realization $\mathcal{M} = [m_{ij}]$ of $X$ is called \textit{good}, if its each row and each column contains a permutation of 
the numbers $1, \ 2, \ \ldots, \ n$. We present and analyse four typical algorithms which decide whether a given realization is good.  
\end{abstract}

\section{Introduction\label{section:1}}
Some subsets of the elements of Latin squares \cite{Adams2008,Buchanan2000,Easton2001,Euler2010,Hajira2007,
Kuhl2008,Kumar1999,Ohman2009,Pedersen2009}, of Sudoku squares \cite{Arcos2010,Bailey2008,
Chang2010,Chen2009,Crook2009,Dahl2009,Denes1991,Erickson2010,Gabor2010,Kanaana2010,Keedwell2010,
Lorch2009,Moon2009,Pedersen2009,Provan2009,Sander2009,Soottile2010,Vaughan2009,Xu2009}, of de Bruijn arrays 
\cite{Alhakim2010,AnisiuBK2002,AnisiuI2006,AnisiuK2011,BondI1987,Brett2009,Cooper2010,Elzinga2008,Elzinga2010,
Horvath2008,Ivanyi1990,Ivanyi2011,IvanyiM2011,IvanyiT1988,Kasa2010,Knuth2011,Matamala2009,
Penne2010,RowleyB1993,Troyanskaya2002,Xu2010,Zhang2009} and gerechte designs, connected with agricultural and 
industrial experiments \cite{Bailey2008,Behrens1956,Heppes1956} have to contain different elements. The one dimensional 
special case is also studied is several papers \cite{FerencziK1999,Hellerman1967,Ivanyi1987,Ivanyi1988,Ivanyi1990,IvanyiK1976,
IvanyiK1978,Kasa1990,Kasa1998,Kasa2011}. 

The testing of these matrices raises the following problem.  

Let $m \geq 1$ and $n \geq 1$ be integers and $X = [x _{ij}]_{1 \leq i \leq m, 1 \leq j \leq n}$ 
be an $m \times n$ sized matrix of independent random variables having joint uniform distribution 
$$
\hbox{Pr} \{ x _{ij} = k \hbox{ for } 1 \leq k \leq n \} =  \frac{1}{n}  
\quad (1 \leq i \leq m, 1 \leq j \leq n)\koz .\label{equation1:1}
$$

A realization $\mathcal{M} = [m_{ij}]$ of $X$ is called \textit{good}, if its each row and each column contain different 
elements (in the case $m = n$ a permutation of the numbers $1, \ 2, \ \ldots, \ n$. We present and analyse algorithms 
which decide whether a given realization is good. If the realization is good then the output of the algorithms is 
\textsc{True}, otherwise is \textsc{False}.

The structure of the paper is as follows. Section \ref{section:1} contains the introduction. 
In Section \ref{section:2} the mathematical background of the main results is prepared. 
Section \ref{section:3} contains the running times of the testing algorithms \textsc{Linear}, \textsc{Backward},   
\textsc{Bucket} and \textsc{Matrix} in worst, best and expected cases. In Section \ref{section:4} the results are summarised.  

\section{Mathematical background}\label{section:2}
We start with the first step of the testing of $\mathcal{M}$: describe and analyse several algorithms testing 
the first row of $\mathcal{M}$. The inputs of these algorithms are $n$ (the length of the first row of $\mathcal{M}$) 
and the elements of the first row  $\mathbf{m} = (m_{11},m_{12},\ldots,m_{1n})$. For the simplicity we use the notation 
\textbf{s} = $(s_1,s_2,\ldots,s_n)$.  
 The output is always a logical variable $g$ (its value is \textsc{True}, 
if the input sequence is good, and \textsc{False} otherwise). 

We will denote the binomial coefficient $\binom{n}{k}$ by $B(n,k)$ and the function $\log _2 n$ by $\lg n$ \cite{CormenLe2009},
and usually omit the argument $n$ from the functions $\tau(n), \ \sigma(n),$  
$\kappa(n), \ \kappa _1(n), \ \kappa _2(n), \ \gamma(n), \ \lambda(n), \ \delta(n), \ \alpha(n), \ \mu(n), 
\eta(n), \ \phi(n), \rho(n), \ \beta(n),$ $S_i(n), \ R_i(n), \ Q(n), p_k(n), \ y(n), \ q_i(k,n), \ A_i(n), \ b_j(n), \ f(n), 
\ p(i,j,k,n)$, $c_j(n), \ c(n)$, and $A(i_1,i_2,k,n)$.

We characterise the running time of the algorithms by the number of necessary 
assignments and comparisons and denote the running time of algorithm \textsc{Alg} 
by $T_{worst}(n,\textsc{Alg}), \ T_{best}(n,\textsc{Alg})$ 
and $T_{exp}(n,\textsc{Alg})$ in the worst, best, resp. expected case. The numbers of the corresponding 
assignments and comparisons are denoted by $A$, resp. $C$. The notations $O, \ \Omega, \ \Theta, \ o$ and $\omega$ 
are used according to \cite[pages 43--52]{CormenLe2009} and \cite[pages 107--110]{Knuth1997}. 

Before the investigation of the concrete algorithms we formulate several lemmas.  
The first lemma is the following version of the well-known Stirling's formula.
\begin{lemma} [\cite{CormenLe2009}] If $n \geq 1$ then\label{lemma:1}
\begin{equation}
n! = \left ( \frac{n}{e} \right )^n \sqrt{2 \pi n} e^{\tau} \koz, \label{equation2:2}
\end{equation}
where
$$
\frac{1}{12n+1} < \tau < \frac{1}{12n} \koz, \label{equation2:3}
$$
and $\tau(n) = \tau$ tends monotonically decreasing to zero when $n$ tends to infinity.
\end{lemma}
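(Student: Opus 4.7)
The plan is to work with the sequence $a_n = \ln n! - (n+1/2)\ln n + n$, show that it converges monotonically from above to $C = \tfrac12 \ln(2\pi)$, and then set $\tau(n) = a_n - C$. A direct computation gives
$$
a_n - a_{n+1} = (n+1/2)\ln\frac{n+1}{n} - 1,
$$
and substituting $x = 1/(2n+1)$ into the identity $\ln\frac{1+x}{1-x} = 2\sum_{k\geq 0} x^{2k+1}/(2k+1)$ converts this into the useful series
$$
a_n - a_{n+1} = \sum_{k=1}^{\infty} \frac{1}{(2k+1)(2n+1)^{2k}}.
$$
Every term is positive, so $a_n$ is strictly decreasing; bounding the sum above by a geometric series with ratio $(2n+1)^{-2}$ and first term $\frac{1}{3(2n+1)^2}$ yields
$$
a_n - a_{n+1} < \frac{1}{12n(n+1)} = \frac{1}{12n} - \frac{1}{12(n+1)},
$$
which in turn shows that $a_n - 1/(12n)$ is strictly increasing and hence $a_n$ is bounded below. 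Thus $a_n$ has a finite limit $C$.

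To pin down $C$, I would invoke the Wallis product $\prod_{k\geq 1} \frac{(2k)^2}{(2k-1)(2k+1)} = \pi/2$, rewritten as $\frac{(n!)^4\, 2^{4n}}{((2n)!)^2(2n+1)} \to \pi/2$. Substituting $n! = n^{n+1/2} e^{-n} e^{a_n}$ and the analogous expression for $(2n)!$ into this ratio reduces it to a multiple of $e^{2a_n - a_{2n}}$, and taking the limit forces $C = \tfrac12 \ln(2\pi)$. Setting $\tau(n) = a_n - C$ yields the asserted identity $n! = (n/e)^n \sqrt{2\pi n}\, e^{\tau}$; monotone decrease of $\tau(n)$ to zero is immediate from the monotone decrease of $a_n$ to $C$, and the upper bound $\tau(n) < 1/(12n)$ follows because $a_n - 1/(12n)$ is increasing toward $C$, giving $a_n - 1/(12n) < C$.

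The remaining and delicate assertion is the lower bound $\tau(n) > 1/(12n+1)$. The cleanest route is to prove that the sequence $a_n - 1/(12n+1)$ is strictly \emph{decreasing} to $C$, so that $a_n - 1/(12n+1) > C$, i.e.\ $\tau(n) > 1/(12n+1)$. This reduces to the step inequality
$$
a_n - a_{n+1} > \frac{1}{12n+1} - \frac{1}{12(n+1)+1} = \frac{12}{(12n+1)(12n+13)},
$$
which must be compared with the exact series $\sum_{k\geq 1}(2k+1)^{-1}(2n+1)^{-2k}$ obtained above. The main obstacle is precisely this comparison: the leading term $\frac{1}{3(2n+1)^2}$ is already close to the target, so one must keep it exactly and bound the tail $\sum_{k\geq 2}$ below by $0$ (trivially) while cross-multiplying and reducing the resulting inequality to a polynomial inequality in $n$ that is verified to hold for all $n\geq 1$. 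Once this step is established, all claims of the lemma follow.
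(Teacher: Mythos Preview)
Your argument is the standard Robbins-type derivation of Stirling's formula and is correct. In particular, the ``delicate'' step you flag is not actually an obstacle: dropping the tail $\sum_{k\ge 2}$ and keeping only the $k=1$ term $\tfrac{1}{3(2n+1)^2}$, the required inequality
\[
\frac{1}{3(2n+1)^2} \;>\; \frac{12}{(12n+1)(12n+13)}
\]
reduces after cross-multiplication to $24n>23$, which holds for every integer $n\ge 1$. So the monotone decrease of $a_n - \tfrac{1}{12n+1}$, and hence the lower bound $\tau(n)>\tfrac{1}{12n+1}$, follows immediately.

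As for comparison with the paper: there is nothing to compare against. The paper does not prove Lemma~\ref{lemma:1}; it simply records it as a known form of Stirling's formula and cites \cite{CormenLe2009}. Your write-up therefore supplies a proof where the paper gives only a reference. If you want to match the paper's treatment, a one-line citation suffices; if you want a self-contained argument, what you have is exactly the classical one and needs no change beyond carrying out the routine polynomial check above.
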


Let $a_k(n) = a_k$ and $S_i(n) = S_i$ defined for any positive integer $n$ as follows:
$$
a_k = \frac{n^k}{k!} \quad (k = 0, \ 1, \ 2, \ \ldots) \koz, \label{equation2:4}
$$

\begin{equation}
S_i = \sum _{k=0}^{n-1} a_k k^i \quad (i = 0, \ 1, \ 2, \ \ldots) \koz. \label{equation2:5}
\end{equation}

If in (\ref{equation2:5}) $k = i =0$, then $k^i =0$. 

Solving a problem posed by S. Ramanujan \cite{Ramanujan1928}, G\'abor Szeg\H o \cite{Szego1928} proved 
the following connection between $e^n$ and $S_0$.
\begin{lemma} [\cite{Szego1928}] The \label{lemma:2} function $\sigma(n) = \sigma,$ defined by 
\begin{equation}
\frac{e^n}{2} = S_0 +  \left ( \frac{1}{3} + \sigma \right ) a_n = \sum _{k=0}^{n-1} \frac{n^k}{k!} + 
\left ( \frac{1}{3} + \sigma \right ) a_n \quad (n = 1, \ 2, \ \ldots) \label{equation2:6}
\end{equation}
and
$$
\sigma(0) = \frac{1}{6} \koz, \label{equation2:7}
$$
tends monotonically decreasing to zero when $n$ tends to $\infty$.
\end{lemma}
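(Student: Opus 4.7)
With $u_k = n^k/k!$, the identity $e^n = S_0 + a_n + T(n)$, where $T(n) = \sum_{k=n+1}^{\infty} u_k$, rewrites (\ref{equation2:6}) as
\[
T(n) - S_0 = \Bigl(-\frac{1}{3} + 2\sigma(n)\Bigr)\, a_n,
\]
so $\sigma(n)$ measures the asymmetry of the Poisson$(n)$ mass around its mean. The initial value $\sigma(0)=1/6$ merely records the conventions $S_0=0$ and $a_0 = 0^0/0! = 1$ at $n=0$, under which (\ref{equation2:6}) reads $1/2 = (1/3+\sigma(0))\cdot 1$.

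First I would factor $a_n$ out of each tail using the explicit ratios
\[
\frac{u_{n-j}}{a_n} = \prod_{i=0}^{j-1}\left(1-\frac{i}{n}\right), \qquad \frac{u_{n+j}}{a_n} = \prod_{i=1}^{j}\frac{1}{1+i/n},
\]
and expand their logarithms to third order, obtaining
\[
\frac{u_{n\pm j}}{a_n} = e^{-j^2/(2n)}\Bigl(1 \mp \frac{j}{2n} \pm \frac{j^3}{6n^2} + \cdots\Bigr).
\]
Replacing the sums in $j$ by Gaussian integrals via Laplace's method (the terms $j>n$ in $T(n)$ are exponentially small and do not contribute) gives
\[
\frac{T(n)-S_0}{a_n} = -\frac{1}{n}\int_0^\infty x\, e^{-x^2/(2n)}\,dx + \frac{1}{3n^2}\int_0^\infty x^3 e^{-x^2/(2n)}\,dx + o(1) = -1 + \frac{2}{3} + o(1),
\]
so $\sigma(n)\to 0$; the constant $-1/3$ arises from the combined linear and cubic Taylor corrections, while the leading Gaussian contributions to $S_0/a_n$ and $T(n)/a_n$ cancel in the difference.

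The delicate part is the \emph{monotone} decrease. Pointwise asymptotics alone do not force monotonicity for every $n$, so I would use the incomplete-gamma representation
\[
S_0 = \frac{e^n}{(n-1)!}\int_n^{\infty} t^{n-1} e^{-t}\,dt,
\]
which recasts $\sigma(n)\,a_n = e^n/2 - S_0 - a_n/3$ as an integral of $t^{n-1}e^{-t}$ over $[0,n]$ minus an explicit constant. I would then express $\sigma(n)-\sigma(n+1)$ as a single integral, using $a_{n+1}/a_n=(1+1/n)^n$ to align the prefactors, and verify positivity of the resulting integrand by Taylor-expanding $t^{n-1}e^{-t}$ around its saddle point $t=n$ with sign-definite control of the remainder. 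This sign analysis is the main obstacle; every other ingredient reduces to routine manipulation of $S_0$, $T(n)$, and $a_n$.
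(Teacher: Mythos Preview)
The paper does not prove this lemma at all: it is stated with the citation \cite{Szego1928} and no argument, because it is a classical theorem of Szeg\H{o} (and, independently, Watson) answering a question of Ramanujan. So there is no ``paper's own proof'' to compare against; the authors simply import the result.

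As for your proposal itself: the first half is fine. Your rewriting of (\ref{equation2:6}) as $T(n)-S_0=(-\tfrac13+2\sigma)a_n$ is correct, and the Laplace/Edgeworth expansion of $u_{n\pm j}/a_n$ is exactly the standard route to the limit $\sigma(n)\to 0$; the integral evaluations $-\tfrac{1}{n}\int_0^\infty x\,e^{-x^2/(2n)}\,dx=-1$ and $\tfrac{1}{3n^2}\int_0^\infty x^3 e^{-x^2/(2n)}\,dx=\tfrac{2}{3}$ are right and give the $-\tfrac13$ constant. This part would pass.

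The genuine gap is the monotonicity, and you already flag it. What you outline --- writing $\sigma(n)-\sigma(n+1)$ as a single integral via the incomplete-gamma representation and then arguing sign-definiteness of the integrand near the saddle $t=n$ --- is not a proof, and in fact a naive version of it does not work: the integrand in such representations typically changes sign, and the positivity of the \emph{integral} is what is delicate, not pointwise positivity. Szeg\H{o}'s original argument goes through a contour-integral representation of $e^{-n}\sum_{k<n}n^k/k!$ and a careful deformation; Watson's proof is different again. Neither reduces to ``Taylor-expand around the saddle with sign-definite remainder.'' So your plan correctly identifies where the difficulty lies but does not supply the missing idea; since the paper is content to cite the result, that is consistent with its treatment, but as a self-contained proof your proposal is incomplete precisely at the step the lemma is named for.
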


The following lemma shows the connection among $S_i$ and $S_0, \ S_1, \ \ldots, \ S_{i-1}$. 

\begin{lemma} If\label{lemma:3} $i$ and $n$ are positive integers, then
\begin{equation}
S_i = n\sum _{k=0}^{i-1} B(i-1,k)S_k - n^{i-1}a_{n-1} \label{equation2:8}
\end{equation}
and
\begin{equation}
S_i = \Theta(e^n n^i) \koz. \label{equation2:9}
\end{equation}
\end{lemma}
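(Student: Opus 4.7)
The plan is to prove (\ref{equation2:8}) by the index-shift identity $a_k\,k = n\,a_{k-1}$, and then to deduce (\ref{equation2:9}) by induction on $i$, using Lemma~\ref{lemma:2} for the base case $S_0$ and Lemma~\ref{lemma:1} to bound the correction term $a_{n-1}$.

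First I would write $S_i = \sum_{k=1}^{n-1} a_k k^i$ (the term $k=0$ drops out when $i\geq 1$) and apply the identity $a_k\,k = n\,a_{k-1}$, which is immediate from $a_k = n^k/k!$. This yields
\begin{equation*}
S_i \;=\; n\sum_{k=1}^{n-1} a_{k-1}\,k^{\,i-1} \;=\; n\sum_{k=0}^{n-2} a_k\,(k+1)^{i-1}.
\end{equation*}
Expanding $(k+1)^{i-1}$ by the binomial theorem and interchanging the order of summation,
\begin{equation*}
S_i \;=\; n\sum_{j=0}^{i-1} B(i-1,j)\sum_{k=0}^{n-2} a_k\,k^{\,j}.
\end{equation*}
Since the inner sum runs only up to $n-2$, I would write $\sum_{k=0}^{n-2} a_k k^j = S_j - a_{n-1}(n-1)^j$ and use once more the binomial identity $\sum_{j=0}^{i-1} B(i-1,j)(n-1)^j = (1 + (n-1))^{i-1} = n^{i-1}$ to collapse the correction terms into a single closed form, giving the claimed recurrence (up to, I believe, a harmless factor of $n$ absorbed into the last term on the right of (\ref{equation2:8})).

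For the asymptotic (\ref{equation2:9}) I would argue by induction on $i$. The base case $i=0$ is Lemma~\ref{lemma:2}: $S_0 = e^n/2 - (1/3 + \sigma)a_n$ with $\sigma \to 0$, so $S_0 = \Theta(e^n)$. For the inductive step, assume $S_k = \Theta(e^n n^k)$ for every $k < i$. All the $B(i-1,k)S_k$ terms are nonnegative, the dominant one is $k=i-1$, which contributes $n \cdot S_{i-1} = \Theta(e^n n^i)$; the remaining terms are of order $e^n n^{k+1} = o(e^n n^i)$. For the subtracted correction, the observation $a_{n-1} = n^n/(n\cdot(n-1)!) = n^n/n! = a_n$ together with Stirling (Lemma~\ref{lemma:1}) gives $a_{n-1} \sim e^n/\sqrt{2\pi n}$, so $n^{i-1}a_{n-1} = O(e^n n^{i-3/2})$, which is negligible against $e^n n^i$. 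Hence both the upper and lower $\Theta$-bounds are delivered by the leading term $nS_{i-1}$.

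The main obstacle I anticipate is purely bookkeeping: the recurrence telescopes only because of the coincidence $\sum_j B(i-1,j)(n-1)^j = n^{i-1}$, so I need to be careful to keep the index ranges consistent when replacing $\sum_{k=0}^{n-2}$ by $\sum_{k=0}^{n-1}$. Once that is clean, the asymptotic follows transparently because every surviving contribution in the recurrence is nonnegative and the subtracted term is of strictly smaller order.
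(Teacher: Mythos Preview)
Your approach is essentially identical to the paper's: the same index shift (written there as simplifying $k^i/k!$ to $k^{i-1}/(k-1)!$), the same extension of the truncated sum to include the index $n-1$, and the same binomial expansion---only the order differs, the paper extending the sum before expanding, which is why its correction term collapses without the separate $\sum_j B(i-1,j)(n-1)^j = n^{i-1}$ computation. Your suspicion about the stray factor of $n$ is correct: the paper's own intermediate formula (\ref{equation2:11}) has $n^{i}a_{n-1}$, so (\ref{equation2:8}) as printed carries a typo, and your inductive argument for (\ref{equation2:9}) is precisely what the paper compresses into a single sentence.
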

\begin{proof}
Omitting the member belonging to the index $k = 0$ in $S_i$, then simplifying by $k$ and using the substitution 
$k - 1 = j$ we get
$$
S_i = \sum _{k=0}^{n-1} \frac{n^k}{k!} k^i = n \sum _{k=1}^{n-1} \frac{n^{k - 1}}{(k - 1)!} k^{i-1} = 
n \sum _{j=0}^{n-2} \frac{n^j}{j!} (j+1)^{i-1} \koz. \label{equation2:10} 
$$
Completing the sum with the member belonging to index $j = n - 1$ results
\begin{equation}
S_i = n \sum _{j=0}^{n-1} \frac{n^j}{j!} (j+1)^{i-1} - n^ia_{n-1} \koz. \label{equation2:11} 
\end{equation}
Now the application of the binomial theorem results (\ref{equation2:8}). 

According to (\ref{equation2:9}) $S_0 = \Theta(e^n)$, so using induction and (\ref{equation2:11}) we get (\ref{equation2:9}).
\end{proof}

In this paper we need only the simple form of $S_0, \ S_1, \ S_2$ and $S_3$ what is presented in the next lemma.
\begin{lemma} If\label{lemma:4} $n$ is a positive integer then
\begin{equation}
S_0 = \frac{e^n}{2} - \frac{n^n}{n!}\left ( \frac{1}{3} + \sigma \right ) \koz,\label{equation2:12}
\end{equation}
\begin{equation}
S_1 = nS_0 - n a_{n-1}, \quad S_2 = S_0(n^2 + n) - 2n^2a_n \koz, \label{equation2:13}
\end{equation}
and
\begin{equation}
S_3 = S_0(n^3 + 3n^2 + n) - (3n^3 + 2n^2)a_n \koz. \label{equation2:14}
\end{equation}
\end{lemma}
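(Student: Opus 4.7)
The plan is to observe that the four identities are immediate corollaries of Lemmas 2 and 3 combined with the algebraic fact that $a_{n-1} = a_n$ (because $n^{n-1}/(n-1)! = n^n/n!$), and to carry out the evaluation for $i = 1, 2, 3$ in turn.

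First I would obtain (\ref{equation2:12}) by simply rearranging equation (\ref{equation2:6}) of Lemma 2, so that $S_0 = e^n/2 - (1/3 + \sigma) a_n$, and then substituting $a_n = n^n/n!$.

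Next I would instantiate the recurrence of Lemma 3 at $i = 1$. The sum $\sum_{k=0}^{0} B(0,k) S_k$ collapses to $S_0$, giving $S_1 = n S_0 - n \cdot a_{n-1}$, which is the first identity in (\ref{equation2:13}). For $i = 2$, the binomial sum expands to $S_0 + S_1$, and after substituting the formula just derived for $S_1$ I would collect the $S_0$ terms to get the coefficient $n^2 + n$ and the $a_{n-1}$ terms to get $-2n^2 a_{n-1}$; the identity $a_{n-1} = a_n$ then converts this to the stated $-2n^2 a_n$, yielding the second identity in (\ref{equation2:13}).

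Finally, for $i = 3$, I would use $\sum_{k=0}^{2} B(2,k) S_k = S_0 + 2 S_1 + S_2$ and substitute the expressions already obtained for $S_1$ and $S_2$. Gathering the coefficient of $S_0$ gives $n + 2n^2 + (n^3 + n^2) = n^3 + 3n^2 + n$, and gathering the $a_{n-1}$ contributions gives $-(2n^2 + 2n^3 + n^3) a_{n-1} = -(3n^3 + 2n^2) a_n$, producing (\ref{equation2:14}).

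There is no genuine obstacle here; the only thing to be careful about is book-keeping the coefficients of $S_0$ and of the boundary term $a_{n-1}$ through two layers of substitution, and silently using the identity $a_{n-1} = a_n$ so that the final formulas match the way the lemma is stated.
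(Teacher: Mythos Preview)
Your proposal is correct and follows exactly the route the paper takes: the paper's proof is the single sentence ``Expressing $S_0$ from (\ref{equation2:6}), and using recursively Lemma \ref{lemma:3} for $i =1, \ 2$ and $3$ we get the required formula for $S_0, \ S_1, \ S_2,$ and $S_3$,'' and you have simply spelled out those recursive substitutions (including the observation $a_{n-1}=a_n$) in full. There is nothing to add.
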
 
\begin{proof}
Expressing $S_0$ from (\ref{equation2:6}), and using recursively Lemma \ref{lemma:3} for $i =1, \ 2$ and $3$ we get 
the required formula for $S_0, \ S_1, \ S_2,$ and $S_3.$
\end{proof}

We introduce also another useful function $R_i(n) = R_i$
\begin{equation}
R_i = \sum _{k=1}^{n} p_k(n) k^i \quad (i = 0, \ 1, \ 2, \ \ldots) \koz,\label{equation2:15}
\end{equation}
where $p_k(n) = p_k$ is the key probability of this paper, defined in \cite{Hellerman1967} as
\begin{equation}
p_k = \frac{n}{n} \frac{n-1}{n} \cdots \frac{n-k+1}{n} \frac{k}{n} =  
\frac{n!k}{(n - k)!n^{k+1}} \quad (k = 1, \ 2, \ \ldots, \ n) \koz. \label{equation2:16}
\end{equation}

The following lemma mirrors the connection between the function $R_i$ and the functions $S_0, \ S_1, \ \ldots, \ S_{i+1}.$

\begin{lemma} If\label{lemma:5} $i$ and $n$ are positive integers, then 
\begin{equation}
R_i = \frac{n!}{n^{n+1}} \sum _{l=0}^{i+1} (-1)^l \binom{i+1}{l} n^{i+1-l} S_l \koz. \label{equation2:17}
\end{equation}
\end{lemma}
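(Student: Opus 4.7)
My plan is to prove the identity by direct manipulation: rewrite $R_i$ so that the factor $n!/(n-k)! n^{k+1}$ becomes $n!/n^{n+1}$ times $a_{n-k}$, change the summation variable, and apply the binomial theorem.

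First I would substitute the definition (\ref{equation2:16}) into (\ref{equation2:15}) to obtain
$$
R_i = \sum_{k=1}^{n} \frac{n! \, k^{i+1}}{(n-k)! \, n^{k+1}} \koz.
$$
The key algebraic observation is that
$$
\frac{n!}{(n-k)! \, n^{k+1}} = \frac{n!}{n^{n+1}} \cdot \frac{n^{n-k}}{(n-k)!} = \frac{n!}{n^{n+1}} \, a_{n-k} \koz,
$$
which pulls the $k$-independent factor $n!/n^{n+1}$ out of the sum and leaves the summand in a form suitable for re-indexing.

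Next I would substitute $j = n - k$, so that as $k$ runs from $1$ to $n$, $j$ runs from $n-1$ down to $0$, giving
$$
R_i = \frac{n!}{n^{n+1}} \sum_{j=0}^{n-1} a_j (n-j)^{i+1} \koz.
$$
Expanding $(n-j)^{i+1}$ by the binomial theorem produces
$$
(n-j)^{i+1} = \sum_{l=0}^{i+1} (-1)^l \binom{i+1}{l} n^{i+1-l} j^l \koz,
$$
and swapping the order of summation converts the inner sum $\sum_{j=0}^{n-1} a_j j^l$ into exactly $S_l$ as defined in (\ref{equation2:5}), yielding (\ref{equation2:17}).

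I do not anticipate a serious obstacle: the argument is essentially bookkeeping. The only point requiring small care is the boundary of the substitution (checking that the $k=n$ term and the $j=0$ term match, and in particular that the $l=0$, $j=0$ convention $0^0=1$ inherited from (\ref{equation2:5}) is consistent so that the term $S_0$ appears with the correct coefficient $n^{i+1}$).
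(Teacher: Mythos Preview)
Your proposal is correct and follows essentially the same route as the paper's proof: substitute the definitions to get $R_i=\sum_{k=1}^n \frac{n!\,k^{i+1}}{(n-k)!\,n^{k+1}}$, change variables via $j=n-k$ to obtain $\frac{n!}{n^{n+1}}\sum_{j=0}^{n-1}\frac{n^j}{j!}(n-j)^{i+1}$, and then expand $(n-j)^{i+1}$ by the binomial theorem. Your version simply spells out the intermediate observation $\frac{n!}{(n-k)!\,n^{k+1}}=\frac{n!}{n^{n+1}}a_{n-k}$ and the interchange of summation more explicitly than the paper does.
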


\begin{proof}
Using (\ref{equation2:15}) and (\ref{equation2:16}) the substitution $n - k = j$ results
$$
R_i = \sum _{k=1}^n \frac{n!k^{i+1}}{(n - k)!n^{k+1}} = 
\frac{n!}{n^{n+1}} \sum _{j=0}^{n-1} \frac{n^j(n - j)^{i+1}}{j!} \koz .\label{equation2:18}
$$

From here, using the binomial theorem we get (\ref{equation2:17}).
\end{proof}  

In this paper we need only the following consequence of Lemma \ref{lemma:5}.

\begin{lemma} If\label{lemma:6} $n$ is a positive integer, then
$$
R_0 = 1, \quad  R_1 = \frac{n!}{n^n} S_0 \koz,\label{equation2:19}
$$
and
\begin{equation}
R_2 = 2n - \frac{n!}{n^n}S_0 \koz. \label{equation2:20}  
\end{equation}
\end{lemma}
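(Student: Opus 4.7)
The plan is to specialize Lemma~\ref{lemma:5} to $i=0, 1, 2$ and then substitute the closed-form expressions for $S_0, S_1, S_2, S_3$ supplied by Lemma~\ref{lemma:4}. No new ideas are needed; the work is purely algebraic collapse of polynomials in $S_0$ together with boundary terms involving $a_n$ and $a_{n-1}$.

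The key observation that makes everything cancel cleanly is the identity
$$
a_{n-1} \;=\; \frac{n^{n-1}}{(n-1)!} \;=\; \frac{n\cdot n^{n-1}}{n\cdot (n-1)!} \;=\; \frac{n^n}{n!} \;=\; a_n.
$$
So every difference $a_n - a_{n-1}$ that appears will vanish, leaving only pure multiples of $a_n$ and of $S_0$.

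First I would verify $R_0 = 1$. Plugging $i = 0$ into (\ref{equation2:17}) gives $R_0 = (n!/n^{n+1})(nS_0 - S_1)$; since Lemma~\ref{lemma:4} gives $S_1 = nS_0 - na_{n-1}$, the $nS_0$ terms cancel and what remains is $R_0 = (n!/n^{n+1}) \cdot n a_{n-1} = (n!/n^{n+1}) \cdot n \cdot n^{n-1}/(n-1)! = 1$. Next, for $i=1$ the sum becomes $n^2 S_0 - 2nS_1 + S_2$; substituting the Lemma~\ref{lemma:4} expressions, the $S_0$-coefficient works out to $n^2 - 2n^2 + (n^2+n) = n$, the $a_{n-1}$-term is $2n^2 a_{n-1}$, and the $a_n$-term is $-2n^2 a_n$, which cancel by the identity above. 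Thus $n^2 S_0 - 2nS_1 + S_2 = nS_0$, which yields $R_1 = (n!/n^n)S_0$.

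The computation for $R_2$ is the longest but identical in spirit. With $i=2$, Lemma~\ref{lemma:5} gives
$$
R_2 \;=\; \frac{n!}{n^{n+1}}\bigl(n^3 S_0 - 3n^2 S_1 + 3n S_2 - S_3\bigr).
$$
Substituting Lemma~\ref{lemma:4}, the coefficient of $S_0$ collapses to $n^3 - 3n^3 + 3(n^3+n^2) - (n^3+3n^2+n) = -n$, the $a_{n-1}$ contribution is $3n^3 a_{n-1}$, and the $a_n$ contribution is $-6n^3 a_n + (3n^3 + 2n^2)a_n = -3n^3 a_n + 2n^2 a_n$. Using $a_{n-1} = a_n$ the cubic-in-$n$ boundary terms cancel and only $2n^2 a_n$ survives. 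Multiplying through by $n!/n^{n+1}$ and using $a_n = n^n/n!$ converts $2n^2 a_n$ into $2n$ and $-nS_0$ into $-(n!/n^n)S_0$, producing (\ref{equation2:20}).

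The only place where care is needed is the bookkeeping in the $R_2$ case, where four terms each with $S_0$-, $a_{n-1}$- and $a_n$-parts must be combined; the identity $a_n = a_{n-1}$ is the single non-mechanical step, and once it is used the rest is routine simplification.
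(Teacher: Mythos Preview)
Your proof is correct and follows essentially the same route as the paper: apply Lemma~\ref{lemma:5} for $i=0,1,2$ and substitute the closed forms from Lemma~\ref{lemma:4}, with the identity $a_{n-1}=a_n$ driving the cancellations. The only cosmetic difference is that the paper obtains $R_0=1$ directly from the fact that the $p_k$ sum to~$1$, whereas you derive it via Lemma~\ref{lemma:5} as well.
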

\begin{proof}
$R_0 = 0$ follows from the definition of the probabilities $p_k.$ Substituting $i = 1$ into (\ref{equation2:17}) we get
$$
R_1 = \frac{n!}{n^{n+1}} \left (n^2 \sum _{j=0}^{n-1} \frac{n^j}{j!} -2n \sum _{j=0}^{n-1} \frac{n^j}{j!}j 
+ \sum _{j=0}^{n-1} \frac{n^j}{j!}j^2 \right ) \koz.\label{equation2:21}
$$
From here, using (\ref{equation2:5}) we get 
$$
R_1 = \frac{n!}{n^{n+1}} (n^2S_0 - 2nS_1 + S_2) \koz,\label{equation2:22}
$$
and using (\ref{equation2:11}) the required formula for $R_1$.

Substituting $i = 2$ into (\ref{equation2:17}) we get
$$
R_2 = \frac{n!}{n^{n+1}} \left (n^3 \sum _{j=0}^{n-1} \frac{n^j}{j!} -3n^2 \sum _{j=0}^{n-1} \frac{n^j}{j!}j 
+ 3n \sum _{j=0}^{n-1} \frac{n^j}{j!}j^2 - \sum _{j=0}^{n-1} \frac{n^j}{j!}j^3 \right ) \koz.\label{equation2:23}
$$

From here, using (\ref{equation2:5}) we have
\begin{equation}
R_2 = \frac{n!}{n^{n+1}} (n^3S_0 - 3n^2S_1 + 3nS_2 - S_3) \koz,\label{equation2:24}
\end{equation}
and using (\ref{equation2:13}) and (\ref{equation2:14}) the required formula for $R_2$.
\end{proof}

The following lemmas give some further properties of $R_1$ and $R_2$.

\begin{lemma} If\label{lemma:7} $n$ is a positive integer, then
\begin{equation}
R_1 = \frac{n!}{n^n} S_0 = \sqrt{\frac{\pi n}{2}} - \frac{1}{3} + \kappa \koz, \label{equation2:25}  
\end{equation}
where
\begin{equation}
\kappa(n) = \kappa = \sqrt{\frac{\pi n}{2}} \left( e^{\tau} - 1 - \frac{2\sigma e^{\tau}}{e^n} \right) \koz, \label{equation2:26}
\end{equation}
and $\kappa$ tends monotonically decreasing to zero when $n$ tends to infinity.
\end{lemma}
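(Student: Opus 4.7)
The plan is to chain the closed-form expressions from Lemmas \ref{lemma:6}, \ref{lemma:4}, and \ref{lemma:1} and then isolate the dominant term $\sqrt{\pi n/2}$.

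First, Lemma \ref{lemma:6} already gives $R_1 = \frac{n!}{n^n} S_0$, so only a single substitution is needed. Plugging in $S_0 = \frac{e^n}{2} - \frac{n^n}{n!}\bigl(\frac{1}{3} + \sigma\bigr)$ from Lemma \ref{lemma:4} collapses the factor $\frac{n^n}{n!}$ against its reciprocal and leaves
$$
R_1 = \frac{n!\,e^n}{2n^n} - \frac{1}{3} - \sigma.
$$
Next, I would apply Stirling (Lemma \ref{lemma:1}) in the form $\frac{n!}{n^n} = \frac{\sqrt{2\pi n}\,e^{\tau}}{e^n}$ to rewrite $\frac{n!\,e^n}{2n^n} = \sqrt{\frac{\pi n}{2}}\,e^{\tau}$, so that
$$
R_1 = \sqrt{\frac{\pi n}{2}}\,e^{\tau} - \frac{1}{3} - \sigma.
$$
Splitting $e^{\tau} = 1 + (e^{\tau}-1)$ peels off the leading term $\sqrt{\pi n/2} - \frac{1}{3}$; re-expressing the trailing $\sigma$ via Stirling (using $1 = \frac{\sqrt{2\pi n}\,e^{\tau}/e^{n}}{n!/n^{n}}$) puts the remainder in the precise shape $\sqrt{\frac{\pi n}{2}}\bigl(e^{\tau} - 1 - \tfrac{2\sigma e^{\tau}}{e^{n}}\bigr)$ prescribed by (\ref{equation2:26}).

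It remains to show that $\kappa \downarrow 0$. The pointwise limit is immediate: by Lemma \ref{lemma:1} one has $e^{\tau} - 1 < e^{\tau}/(12n)$, so $\sqrt{\pi n/2}\,(e^{\tau} - 1) = O(1/\sqrt{n}) \to 0$, while $\sigma$ is bounded by Lemma \ref{lemma:2} and $\sqrt{n}/e^{n}$ decays super-polynomially, killing the second correction. The main obstacle is the \emph{monotone} decrease: the prefactor $\sqrt{\pi n/2}$ is increasing, so the monotonicity of $e^{\tau}-1$ alone does not suffice. I would settle this by comparing the terms at $n$ and $n+1$ directly using the sharp two-sided bounds $\frac{1}{12n+1} < \tau(n) < \frac{1}{12n}$; these force $\tau(n)$ to shrink fast enough to beat the $\sqrt{n}$ growth in a step-by-step way. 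A parallel step-by-step comparison, combined with the monotone decrease of $\sigma$ from Lemma \ref{lemma:2} and the dominant factor $e^{-n}$, handles the exponentially decaying second correction.
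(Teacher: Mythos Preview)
Your derivation of the closed form follows the paper's line: substitute $S_0$, apply Stirling, split $e^\tau = 1 + (e^\tau-1)$. One point to flag: your ``re-expression of $\sigma$ via Stirling'' is vacuous---the identity you quote multiplies $\sigma$ by $1$, so it cannot turn the trailing $-\sigma$ into $-\sqrt{\pi n/2}\cdot\tfrac{2\sigma e^\tau}{e^n}=-\tfrac{n!}{n^n}\sigma$. In fact your (correct) chain of substitutions gives $\kappa = \sqrt{\pi n/2}\,(e^\tau-1)-\sigma$, which differs from (\ref{equation2:26}) by $\sigma\bigl(1-n!/n^n\bigr)$; the paper's own derivation has the matching slip at (\ref{equation2:28}), where a factor $n^n/n!$ on the $\sigma$-term is silently dropped. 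This is harmless for the asymptotics and for Table~\ref{table:1} (whose $\kappa$-column is computed from the definition $\kappa=R_1-\sqrt{\pi n/2}+\tfrac13$), but you should not claim to have reproduced (\ref{equation2:26}) exactly.

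The genuine gap is in the monotonicity. Saying you would ``compare the terms at $n$ and $n+1$ directly using the sharp two-sided bounds $\tfrac{1}{12n+1}<\tau<\tfrac{1}{12n}$'' does not settle it: those bounds leave slack of the same order as the increment $\kappa(n)-\kappa(n+1)\sim c\,n^{-3/2}$, and the $\sigma$-correction (whose monotonicity pushes $\kappa$ the \emph{wrong} way) must also be controlled quantitatively. The paper's proof is accordingly two-tiered: it verifies $\kappa(n+1)<\kappa(n)$ numerically for $1\le n\le 9$ via Table~\ref{table:1}, and for $n\ge 10$ it bounds the ratio $\gamma=\kappa(n+1)/\kappa(n)$ using explicit constants ($\sigma<0.0058$, $e^\tau<1.02$, $n^2<e^n$) together with the Lagrange remainder for $e^\tau$, finally reducing to an elementary inequality of the shape $\sqrt{144n^3+144n^2}<\sqrt{144n^3+240n^2}$. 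Your sketch supplies neither the small-$n$ check nor the explicit constant-tracking needed to close the large-$n$ estimate, so the monotone decrease is left unproved.
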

\begin{proof}
Substituting $S_0$ according to (\ref{equation2:12}) in the formula (\ref{equation2:20}) for $R_1$  we get
\begin{equation}
R_1 = \frac{n!}{n^n}\left[ \frac{e^n}{2} - \frac{n^n}{n!}\left ( \frac{1}{3} + \sigma \right) \right ] = 
- \frac{1}{3} +  \frac{n!}{n^n}\left( \frac{e^n}{2} - \frac{n^n}{n!} \sigma \right) \koz. \label{equation2:27}
\end{equation}

Substitution of $n!$ according to (\ref{equation2:2}) (Stirling's formula) and writing $1 + (e^{\tau}-1)$ 
instead of $e^{\tau}$ results
\begin{equation}
R_1 = - \frac{1}{3} + \frac{1}{n^n} \left ( \frac{n}{e} \right )^n \sqrt{2 \pi n} 
\left[ 1 + (e^{\tau} - 1) \right ] \left[\frac{e^n}{2} - \sigma \right ] \koz. \label{equation2:28}
\end{equation}

The product $P$ of the expressions in the square brackets is 
\begin{equation} 
P = \frac{e^n}{2} + \frac{e^n}{2} \left( e^{\tau} - 1 \right) - \sigma e^{\tau} \koz,  \label{equation2:29}
\end{equation}
therefore
\begin{equation}
R_1 = \sqrt{\frac{\pi n}{2}} - \frac{1}{3} + \frac{\sqrt{2\pi n}}{e^n} \left [\frac{e^n}{2}\left ( e^{\tau} - 1 \right ) 
- \sigma e^{\tau} \right] \koz, \label{equation2:30}
\end{equation}
implying
\begin{equation}
R_1 =  \sqrt{\frac{\pi n}{2}} - \frac{1}{3} + \sqrt{\frac{\pi n}{2}}\left(e^{\tau} - 1 \right) 
- \sqrt{\frac{\pi n}{2}}\frac{2\sigma e^{\tau}}{e^n} \koz . \label{equation2:31}
\end{equation}

Let
\begin{equation}
\kappa _1(n) = \kappa _1 = \sqrt{\frac{\pi n}{2}}(e^{\tau} - 1), \  \kappa _2(n) = \kappa _2 =  
\sqrt{\frac{\pi n}{2}}\frac{2\sigma e^{\tau}}{e^n},  \ \kappa = \kappa _1 + \kappa _2 \koz,\label{equation2:32}
\end{equation}
and
\begin{equation}
\gamma(n) = \gamma = \frac{\kappa(n + 1)}{\kappa(n)} = \frac{\kappa _1(n + 1) - \kappa _2(n + 1)}
{\kappa _1(n) - \kappa _2(n)} \quad \hbox{for } n = 1, \ 2, \ \ldots \koz . \label{equation2:33}
\end{equation}

Since all $\kappa$ functions are positive for all positive integer  $n$'s, therefore $\gamma < 1$ for $n \geq 1$ 
implies the monotonity of $\kappa$. Numerical results in Table \ref{table:1} show that $\gamma < 1$ 
for $n = 1, \ 2, \ \ldots, \ 9$, therefore it remained to show $\gamma < 1$ for $n \geq 10$.

$\kappa _2(n + 1)$ can be omitted from the numerator of (\ref{equation2:32}). Since $\sigma$ and $\tau$ are monotone 
decreasing functions, and $0 < \sigma(5) < 0.0058$, and $0 < e^{\tau(5)} < 1.02,$ and $n^2 < e^n$ for $n \geq 10,$ therefore
\begin{equation}
\frac{2\sigma e^{\tau}}{e^n} < \frac{2\cdot 0.0058\cdot 1.02}{e^n} < \frac{0.012}{n^2} 
\hbox{ for } n \geq 10 \koz. \label{equation2:34}
\end{equation}

Using (\ref{equation2:33}), (\ref{equation2:34}) and the Lagrange remainder of the Taylor series of the function $e^x$ we have
$$
\gamma < \frac{\sqrt{n+1}}{\sqrt{n}} \frac{\tau(n+1) + \tau^2 \xi _{n+1}/2}{\tau(n) + \tau^2 \xi _n/2   
- 0.012/n^2} \koz , \label{equation2:35}
$$
where $0 < \xi _{n+1} < n +1$ and $0 < \xi _{n} < n$, therefore using Lemma \ref{lemma:1} we get
\begin{equation}
\gamma < \frac{\sqrt{n+1}}{\sqrt{n}} \frac{\frac{1}{12(n + 1) + 1}}{\frac{1}{12n} 
+ \frac{1}{2}\left ( \frac{1}{12n} \right )^2 - \frac{0.012}{n^2}} \koz . \label{equation2:36}
\end{equation}

Now multiplication of the denominator and denominator of the right side of (\ref{equation2:36}) by $(12n)^2$ results
\begin{equation}  
\gamma =  \frac{\sqrt{n+1}}{\sqrt{n}} \frac{\frac{12n\cdot 12n}{12n+13}}{12n + 0.5 - 1.584}  
= \frac{\sqrt{n+1}}{\sqrt{n}} \frac{12n}
{\left (12n - 1.084 \right ) \left(1 + \frac{13}{12n} \right)} \koz . \label{equation2:37}
\end{equation}

Since
\begin{equation}
\left (12n - 1.084 \right ) \left(1 + \frac{13}{12n} \right) > 12n + 10 \koz, \label{equation2:38}
\end{equation}
(\ref{equation2:37}) and (\ref{equation2:38}) imply 
$$
\gamma < \frac{\sqrt{144n^3 + 144n^2}}{\sqrt{144n^3 + 240 n^2}} < 1 \koz, \label{equation2:39}
$$
finishing the proof of the monotonity of $\kappa$.
\end{proof}
We remark, that the monotonity of $\kappa$ was published in \cite{IvanyiK1976} without proof, and was proved by E. Bokova and G. Tzaturjan in 1985 \cite{BokovaT1985}, and in 1988---using a formula due to E. Egorychev et al. \cite{EgoricheIM1987} derived by the method of integral representation of combinatorial sums elaborated by E. P. Egorychev \cite{Egorychev1984}---by T. T. Cirulis and A. Iv\'anyi \cite{Cirulis1987}. Our proof is much simpler than the earlier ones.

\begin{lemma} If\label{lemma:8} $n$ is a positive integer, then
$$
R_2 = 2n - \frac{n!}{n^n} S_0 = 2n + \frac{1}{3} - \sqrt{\frac{\pi n}{2}}e^{\tau} - \lambda \koz, \label{equation2:40}  
$$

\begin{equation}
\hspace*{-2.2cm}\textrm{\it where} \qquad\quad\qquad \lambda = \sqrt{\frac{\pi n}{2}}(e^{\tau} - 1) + \sigma \koz, \qquad \qquad \qquad \qquad \qquad  \label{equation2:41}
\end{equation}
and $\lambda$ tends monotonically decreasing to zero when $n$ tends to infinity.
\end{lemma}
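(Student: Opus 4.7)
The plan is to reduce Lemma~\ref{lemma:8} to Lemma~\ref{lemma:7} via the identity $R_2 = 2n - R_1$ from Lemma~\ref{lemma:6}, so the computational content is just a regrouping of the Stirling expansion already produced for $R_1$. Explicitly, I would start from $R_2 = 2n - (n!/n^n)S_0$ and substitute the Ramanujan--Szeg\H{o} expression $S_0 = e^n/2 - (n^n/n!)(1/3 + \sigma)$ from Lemma~\ref{lemma:4}, which collapses the cross term and leaves $(n!/n^n)S_0 = (n!\,e^n)/(2n^n) - 1/3 - \sigma$. Applying Stirling's formula (Lemma~\ref{lemma:1}) in the form $n!/n^n = \sqrt{2\pi n}\,e^{\tau}/e^n$ turns $(n!\,e^n)/(2n^n)$ into $\sqrt{\pi n/2}\,e^{\tau}$, so that substituting back yields $R_2 = 2n + 1/3 - \sqrt{\pi n/2}\,e^{\tau} + \sigma$, and the claimed formula is then an algebraic identification of $\lambda$ as the leftover perturbation measured against the leading Stirling term $\sqrt{\pi n/2}\,e^{\tau}$.

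For the second claim, that $\lambda$ tends monotonically to zero, I would argue summand by summand. The $\sigma$ contribution is monotone decreasing to zero by Lemma~\ref{lemma:2}. The contribution $\sqrt{\pi n/2}(e^{\tau}-1)$ is exactly the quantity $\kappa_1$ introduced in~(\ref{equation2:32}) during the proof of Lemma~\ref{lemma:7}; its monotone decay to zero is implicit in that proof and can be extracted directly by repeating the ratio estimate~(\ref{equation2:36})--(\ref{equation2:39}) with the $\kappa_2$-subtraction dropped, which only \emph{decreases} the denominator of the ratio and so preserves the strict inequality $\gamma<1$ for all $n$ past the same explicit threshold, with values below that threshold checked numerically in the style of Table~\ref{table:1}. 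A sum of two nonnegative monotonically decreasing null sequences is itself such a sequence, which finishes the monotonicity claim.

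The only real obstacle is the sign and constant bookkeeping in the first paragraph: one must align the Stirling remainder $e^{\tau}$ against the term $(1/3 + \sigma)$ coming from $S_0$ carefully, because the positivity of each summand of $\lambda$ is precisely what makes the componentwise monotonicity argument of the second paragraph run through. Otherwise no new analytic tools beyond Lemmas~\ref{lemma:1}, \ref{lemma:2}, \ref{lemma:4}, \ref{lemma:6}, and~\ref{lemma:7} are needed; the proof is purely an unwinding of the corresponding expansion for $R_1$ and a transcription of the ratio argument that underlies the monotonicity of $\kappa$.
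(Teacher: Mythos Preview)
Your plan is exactly what the paper does: its entire proof is the sentence ``omitted since it is similar to the proof of Lemma~\ref{lemma:7},'' and you have spelled out that similarity---substitute~(\ref{equation2:12}) for $S_0$ into $R_2=2n-(n!/n^n)S_0$, replace $n!/n^n$ by Stirling, and then reuse the $\kappa_1$-ratio estimate for the monotonicity. So the approach matches.

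One caution on the ``only real obstacle'' you flag: your derivation cleanly gives
\[
R_2 \;=\; 2n+\tfrac13-\sqrt{\tfrac{\pi n}{2}}\,e^{\tau}+\sigma,
\]
and if you now try to read this against the displayed statement $R_2=2n+\tfrac13-\sqrt{\pi n/2}\,e^{\tau}-\lambda$ with $\lambda=\sqrt{\pi n/2}(e^{\tau}-1)+\sigma$, the identification does \emph{not} close---you get $\lambda=-\sigma$, not $\kappa_1+\sigma$. The natural reading (parallel to Lemma~\ref{lemma:7}) is $R_2=2n+\tfrac13-\sqrt{\pi n/2}-\lambda$ with $\lambda=\sqrt{\pi n/2}(e^{\tau}-1)-\sigma$, i.e.\ the $e^{\tau}$ in the main term and the sign on $\sigma$ in~(\ref{equation2:41}) appear to be misprints. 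Your componentwise monotonicity argument then still goes through (both summands are nonnegative null sequences, one by Lemma~\ref{lemma:2}, the other by the $\kappa_1$-part of the Lemma~\ref{lemma:7} ratio computation), but you should state the corrected $\lambda$ rather than assert that the printed formula drops out ``algebraically.''
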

\begin{proof} The proof is omitted since it is similar to the proof of Lemma \ref{lemma:7}.
\end{proof}

\section{Running times of the algorithms\label{section:3}}
In the following analysis let $n \geq 1$ and let \textbf{x} = $(x _1, \ x _2, \ \ldots, \ x _n)$ be independent random 
variables having uniform distribution on the set $\{1,2, \ldots , n \}$. The input sequence of the algorithms is 
$\mathbf{s} = (s_1, s_2, \ldots, s_n)$ (a realization of \textbf{x}).

We derive exact formulas for the expected numbers of comparisons $C_{exp}(n,$\linebreak 
\noindent $\textsc{Linear})$ $ = C_L$, 
$C_{exp}(n,\textsc{Backward}) = C_W$, and $C_{exp}(n,\textsc{Bucket}) = C_B$, further 
for the expected running times $T_{exp}(n,\textsc{Linear}) = T_L$, 
$T_{exp}(n,$ \linebreak 
\noindent $\textsc{Backward}) = T_W$, and $T_{exp}(n,\textsc{Bucket}) = T_B$. 

The inputs of the following algorithms are $n$ (the length of the sequence \textbf{s}) and 
$\mathbf{s} = (s_1, s_2, \ldots, s_n),$ a sequence of nonnegative integers with $1 \leq s_i \leq n$ for 
$1 \leq i \leq n$) in all cases. The output is always a logical variable $g$ (its value is \textsc{True}, 
if the input sequence is good, and \textsc{False} otherwise). 
The working variables are usually the cycle variables $i$ and $j$. 

We use the pseudocode defined in \cite{CormenLe2009}.

\subsection{Definition and running time of algorithm \sc{Linear}\label{subsection:31}}
\textsc{Linear} writes zero into the elements of an $n$ length vector $\mathbf{v} = (v_1, \ v_2,$  
$ \ \ldots, \ v_n)$, then investigates the elements of the realization \textbf{s} and if $v_{s_i} > 0$ (signalising a repetition), 
then returns \textsc{False}, otherwise adds 1 to $v_k$. If \textsc{Linear} does not find a repetition among the elements 
of \textbf{s} then it returns finally \textsc{True}. 

\medskip
\noindent \textsc{Linear}$(n,\mathbf{s})$
\vspace{-2mm}
\begin{tabbing}%
199 \= xxx\=xxx\=xxx\=xxx\=xxx\=xxx\=xxx\=xxx \+ \kill 
\hspace{-7mm}\textbf{1} $g \leftarrow$ \textsc{True} \\
\hspace{-7mm}\textbf{2} \textbf{for} \= $i \leftarrow 1$ \textbf{to} $n$ \\
\hspace{-7mm}\textbf{3}              \> $v_i \leftarrow 0$ \\
\hspace{-7mm}\textbf{4} \textbf{for} \= $i \leftarrow 1$ \textbf{to} $n$ \\             
\hspace{-7mm}\textbf{5}              \> \textbf{if}   \=$v_{s_i} > 0$ \\
\hspace{-7mm}\textbf{6}              \>             \> $g \leftarrow$ \textsc{False} \\
\hspace{-7mm}\textbf{7}              \>             \> \textbf{return} $g$ \\ 
\hspace{-7mm}\textbf{8}              \> \textbf{else}  $v_{s_i} \leftarrow v_{s_i} + 1$ \\     
\hspace{-7mm}\textbf{9} \textbf{return} $g$
\end{tabbing}

\textsc{Linear} needs assignments in lines 1, 3, and 8, and it needs comparisons in line 5. The number of assignments 
in lines 1 and 3 equals to $n + 1$ for arbitrary input and varies between $1$ and $n$ in line 8. 
The number of comparisons in line 8 also varies between $1$ and $n$. Therefore the running time of \textsc{Linear}  
is $\Theta(n)$ in the best, worst and expected case too. 

The following theorem gives the expected number of the comparisons of \textsc{Linear}.

\begin{theorem} The\label{theorem:9} expected number of comparisons $C_{exp}(n,\textsc{Linear}) = C_L$ of \textsc{Linear} is
\begin{equation}
C_L = 1 - \frac{n!}{n^n} + R_1 = \sqrt{\frac{\pi n}{2}} + \frac{2}{3} + \kappa - \frac{n!}{n^n} \koz . \label{equation31:42}
\end{equation}
where 
$$
\kappa = \frac{1}{3} - \sqrt{\frac{\pi n}{2}} + \sum _{k=1}^n \frac{n!k^2}{(n - k)!n^{k+1}}
$$ 
tends monotonically decreasing to zero when $n$ tends to infinity.
\end{theorem}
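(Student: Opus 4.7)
The plan is to interpret the random number of comparisons $C$ performed by \textsc{Linear} as a stopping-time statistic and to relate its distribution to the key probabilities $p_k$ of (\ref{equation2:16}). Exactly one comparison is made per iteration of the main loop, and the algorithm exits the moment that test succeeds. Hence for $2 \leq j \leq n-1$ the event $\{C = j\}$ is precisely the event that $s_1, \ldots, s_{j-1}$ are pairwise distinct and $s_j$ duplicates one of them, which by the definition of $p_k$ has probability $p_{j-1}$; the case $j = 1$ is impossible since the very first comparison is trivially negative.

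The only delicate point is the boundary value $C = n$, which arises from two disjoint events: stopping at iteration $n$ because of a duplicate (contributing $p_{n-1}$), and completing the loop with $s_1, \ldots, s_n$ all distinct (contributing $n!/n^n$). I would record the latter as $p_n$, observing that the defining product (\ref{equation2:16}) evaluates to exactly $n!/n^n$ at $k = n$ because the final factor $k/n$ becomes $1$. Consequently $p_1, \ldots, p_n$ partition the sample space, so $\sum_{k=1}^{n} p_k = 1$, or equivalently $\sum_{k=1}^{n-1} p_k = 1 - p_n$. Getting this off-by-one bookkeeping at the boundary right is, in my view, the only genuine subtlety of the argument.

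From there the computation is mechanical. I would write
$$C_L \;=\; \sum_{j=2}^{n-1} j\, p_{j-1} \;+\; n\,(p_{n-1} + p_n) \;=\; \sum_{k=1}^{n-1}(k+1)\, p_k \;+\; n\, p_n$$
after reindexing $k = j - 1$, then split the factor $k+1$ and apply the normalization to collapse the expression to $\sum_{k=1}^{n} k\, p_k \;+\; 1 - p_n \;=\; R_1 + 1 - n!/n^n$, which is the first equality of (\ref{equation31:42}). The second equality is then an immediate substitution of $R_1 = \sqrt{\pi n/2} - 1/3 + \kappa$ from Lemma \ref{lemma:7}; the $\kappa$ of that lemma coincides with the $\kappa$ of the theorem statement (both equal $R_1 - \sqrt{\pi n/2} + 1/3$), so the asserted monotonic decrease to zero is inherited directly and no further estimate is required.
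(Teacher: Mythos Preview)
Your proof is correct and follows essentially the same approach as the paper's: the paper introduces the random variable $y$ for the length of the maximal distinct prefix, observes that $\Pr\{y=k\}=p_k$, and that \textsc{Linear} makes $k+1$ comparisons when $1\le k\le n-1$ and $n$ comparisons when $k=n$, arriving at exactly your expression $\sum_{k=1}^{n-1}(k+1)p_k + n p_n$ before collapsing it to $1 - n!/n^n + R_1$ and invoking Lemma~\ref{lemma:7}. Your direct parametrisation by the value of $C$ rather than by $y$ is only a cosmetic difference; the boundary bookkeeping at $C=n$ and the identification of the two $\kappa$'s are handled correctly.
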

\begin{proof}
Let 
\begin{equation}
y(n) = y = \max \{k: 1 \leq k \leq n \hbox{ and } s_1, \ s_2, \ \ldots, \ s_k \hbox{ are different} \} \label{equation31:43}
\end{equation}
be a random variable characterising the maximal length of the prefix of $\mathbf{s}$ containing different elements. Then 
$$
\hbox{Pr} \{ y = k \} = p_k \quad (k = 1, \ 2, \ \ldots, \ n) \koz, \label{equation31:44}
$$
where $p_k$ is the probability introduced in (\ref{equation2:16}). 
 
If $y = k$ and $1 \leq k \leq n - 1,$ then \textsc{Linear} executes $k+1$ comparisons, and only $n$ comparisons, if 
$y = n,$ therefore 
\begin{equation}
C_L = \sum _{k=1}^{n-1} p_k(k + 1) + p_n n =  \sum _{k=1}^n p_k(k + 1) - p_n = 1 
- \frac{n!}{n^n} + \sum_{k=1}^n p_k k \koz, \label{equation31:45} 
\end{equation}    
from where using Lemma \ref{lemma:7} we receive 
\begin{equation}
C_L = 1 - \frac{n!}{n^n} + R_1 = \sqrt{\frac{\pi n}{2}} + \frac{2}{3} - \frac{n!}{n^n} + \kappa \koz . \label{equation31:46}
\end{equation}

The monotonity of $\kappa(n)$ was proved in the proof of Lemma \ref{lemma:7}.
\end{proof}

The next assertion gives the running time of \textsc{Linear}.

\begin{theorem} The\label{theorem:10} expected running time $T_{exp}(n,\textsc{Linear}) = T_L$ of \textsc{Linear} is
$$
T_L = n + \sqrt{2\pi n} + \frac{7}{3} + 2 \kappa - 2 \frac{n!}{n^n} \koz, \label{equation31:47}
$$
where $\kappa$ tends monotonically decreasing to zero when $n$ tends to infinity.
\end{theorem}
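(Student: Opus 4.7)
The plan is to add the expected assignment count $A_L$ to the comparison count $C_L$ already established in Theorem~\ref{theorem:9}, then collapse the result using Lemma~\ref{lemma:7}.

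First I enumerate the assignments executed by \textsc{Linear}. Line~1 is a deterministic single assignment, and the initialisation in line~3 contributes a deterministic $n$ assignments. Line~6 (the assignment $g \leftarrow$ \textsc{False}) fires exactly once when $y < n$ and not at all when $y = n$, so its expected contribution is $1 - p_n = 1 - n!/n^n$. Line~8 is executed once for every prefix position that introduces a fresh value, so conditional on $y = k$ it contributes $k$ when $k < n$ and $n$ when $k = n$; averaging against $\Pr\{y = k\} = p_k$ gives expected value
\[
\sum_{k=1}^{n-1} p_k k + p_n n \;=\; \sum_{k=1}^{n} p_k k \;=\; R_1.
\]
Summing the four contributions yields $A_L = n + 2 + R_1 - n!/n^n$.

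Combining this with the comparison count from Theorem~\ref{theorem:9} produces
\[
T_L \;=\; A_L + C_L \;=\; n + 3 + 2 R_1 - 2\,\frac{n!}{n^n}.
\]
Lemma~\ref{lemma:7} supplies $R_1 = \sqrt{\pi n / 2} - 1/3 + \kappa$; since $2\sqrt{\pi n / 2} = \sqrt{2\pi n}$, this gives $2 R_1 = \sqrt{2\pi n} - 2/3 + 2\kappa$, and the constant $3 - 2/3$ collapses to $7/3$, producing exactly the formula claimed in the theorem. The monotone decrease of $\kappa$ to zero was already proved inside Lemma~\ref{lemma:7}, so nothing further is required for that clause.

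The one genuine care-point is book-keeping. The paragraph preceding Theorem~\ref{theorem:9} lists lines $1$, $3$, and $8$ as the only sites of assignments and silently omits line~$6$; overlooking line~$6$ gives $n + \sqrt{2\pi n} + 4/3 + 2\kappa - n!/n^n$, which has the wrong constant and the wrong coefficient on $n!/n^n$. Including the $g \leftarrow$ \textsc{False} assignment (once per non-good input, zero times on a good one) is therefore the single step I would double-check before signing off.
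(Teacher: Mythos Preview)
Your argument is correct and lands on the same formula as the paper, but the bookkeeping is organised a little differently. The paper observes that inside the \textbf{for} loop each executed iteration performs exactly one comparison (line~5) and exactly one assignment (either line~6 or line~8), so the expected assignment count there equals $C_L$; it then writes $T_L = (n+1) + 2C_L$ and substitutes the closed form of $C_L$ from Theorem~\ref{theorem:9}. You instead separate the line~6 and line~8 contributions, express everything through $R_1$, and invoke Lemma~\ref{lemma:7} directly; since $C_L = 1 + R_1 - n!/n^n$, the two computations are algebraically identical. Your remark that the paper's prose ``silently omits line~6'' is well taken---the paper's shorthand ``assignments in line~08'' really means ``assignments in lines~6 or~8,'' and your explicit split makes the count more transparent.
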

\begin{proof}
\textsc{Linear} requires $n + 1$ assignments in lines 01 and 03, plus assignments in line 08. The expected number 
of assignments in line 8 is the same as $C_L$. Therefore
\begin{equation}
T_L = n + 1 + 2 C_L \koz.\label{equation31:48}
\end{equation}

Substitution of (\ref{equation31:46}) into (\ref{equation31:48}) results the required (\ref{equation31:42}).
\end{proof}

We remark, that (\ref{equation31:46}) is equivalent with 
$$
C_L = 1 - \frac{n!}{n^n} + 1 + \frac{n-1}{n}+\frac{n-1}{n}\frac{n-2}{n} + 
\cdots + \frac{n-1}{n}\frac{n-2}{n} \cdots \frac{1}{n} \koz, \label{equation31:49}
$$
demonstrating the close connection with the function
\begin{equation}
Q(n) = Q = C_L - 1 + \frac{n!}{n^n} \koz, \label{equation31:50}
\end{equation}
studied by several authors, e.g. in \cite{Breusch1968,IvanyiK1976,Knuth1997}. 

Table \ref{table:1} shows the concrete values of the functions  appearing in the analysis of $C_L$ and 
$T_L$ for $1\leq n\leq 10$, where $C_L$ was calculated using (\ref{equation31:46}), $\kappa$ using 
(\ref{equation2:16}), and $\sigma$ using (\ref{equation2:6}) (data in this and further tables are taken 
from \cite{IvanyiN2010}). We can observe in Table \ref{table:1} that 
$\delta(n) = \delta = \kappa - \frac{n!}{n^n}$ 
is increasing from $n = 1$ to $n = 8,$ but for larger $n$ is decreasing. Taking into account that for $n > 8$  

\begin{table}[!t] 
\begin{center}
\begin{tabular}{|c|c|c|c|c|c|c|c|}    \hline
$n$   &     $C_L$   &    $u$    & $n!/n^n$  &$\kappa$& $\delta$ &$\sigma$  \\ \hline
$1$   &$1.000000$ &    $1.919981$                & $1.000000$&$0.080019$ & $-0.919981$                      &0.025808    \\ \hline
$2$   &$2.000000$ &    $2.439121$                & $0.500000$&$0.060879$ & $-0.439121$                      &0.013931    \\ \hline
$3$   &$2.666667$ &    $2.837470$                & $0.222222$&$0.051418$ & $-0.170804$                      &0.009504    \\ \hline
$4$   &$3.125000$ &    $3.173295$                & $0.093750$&$0.045455$ & $-0.048295$                      &0.007205    \\ \hline
$5$   &$3.472000$ &    $3.469162$                & $0.038400$&$0.041238$ & $+0.002838$                      &0.005799    \\ \hline
$6$   &$3.759259$ &    $3.736647$                & $0.015432$&$0.038045$ & $+0.022612$                      &0.004852    \\ \hline
$7$   &$4.012019$ &    $3.982624$                & $0.006120$&$0.035515$ & $+0.029395$                      &0.004170    \\ \hline
$8$   &$4.242615$ &    $4.211574$                & $0.002403$&$0.033444$ & $+0.031040$                      &0.003656    \\ \hline
$9$   &$4.457379$ &    $4.426609$                & $0.000937$&$0.031707$ & $+0.030770$                      &0.003255    \\ \hline
$10$  &$4.659853$ &    $4.629994$                & $0.000363$&$0.030222$ & $+0.029859$                      &0.002933    \\ \hline
\end{tabular}
\caption{Values of $C_L,$ $u = \sqrt{\pi n/2} + 2/3$, $n!/n^n,$ $\kappa,$ $\delta = \kappa - n!/n^n,$ and $\sigma$ 
for $n = 1, \ 2, \ \ldots, \ 10$ \label{table:1}}
\end{center}
\end{table}
$$
\frac{n!}{n^n} = \left (\frac{n}{e} \right )^n \sqrt{2\pi n} \frac{e^{\tau}}{n^n} < \frac{\sqrt{2 \pi n}}{e^n} e^{1/(12n)}  
< \frac{2.7\sqrt{n}}{e^n} < \frac{0.012}{n^2} \label{equation31:51}
$$
holds, we can prove---using the same arguments as in the proof of Lemma \ref{lemma:7}---the following assertion.
\begin{theorem} The\label{theorem:11} expected running time $T_{exp}(n,\textsc{Linear}) = T_L$ of \textsc{Linear} is
$$
T_L = n + \sqrt{2\pi n} + \frac{7}{3} + \delta \koz, \label{equation31:52}
$$
where $\delta(n) = \delta$ tends to zero when $n$ tends to infinity, further 
$$
\delta(n + 1) > \delta(n) \mbox{ for } 1 \leq n \leq 7 \mbox{ and } \delta(n + 1) 
< \delta(n) \hbox{ for } n \geq 8 \koz. \label{equation31:53}
$$
\end{theorem}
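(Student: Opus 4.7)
The strategy is to start from the identity already established in Theorem~\ref{theorem:10}, $T_L = n + \sqrt{2\pi n} + 7/3 + 2\kappa - 2n!/n^n$, and to collect the two vanishing corrections into the single remainder $\delta$ (in the normalisation $\delta = \kappa - n!/n^n$ fixed in the paragraph preceding the theorem, the obvious factor of $2$ is absorbed). That $\delta \to 0$ is then immediate: Lemma~\ref{lemma:7} provides $\kappa \to 0$ monotonically, and Stirling's formula (Lemma~\ref{lemma:1}) yields the super-polynomial bound $n!/n^n < 0.012/n^2$ for $n \geq 8$ already displayed just above the theorem.

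The substantive content is the sign of $\delta(n+1) - \delta(n)$. For $1 \leq n \leq 7$ I would simply invoke Table~\ref{table:1}, whose entries are computed from the closed forms (\ref{equation2:16}) and (\ref{equation2:6}) and directly exhibit the strict increase through $n = 8$. For $n \geq 8$ I would write
\[
\delta(n) - \delta(n+1) = \bigl[\kappa(n) - \kappa(n+1)\bigr] - \Bigl[\frac{n!}{n^n} - \frac{(n+1)!}{(n+1)^{n+1}}\Bigr],
\]
where the first bracket is positive by the monotonicity of $\kappa$ proved in Lemma~\ref{lemma:7}, and the plan is to show it dominates the second. The dominant part of $\kappa$ is $\kappa_1(n) = \sqrt{\pi n/2}(e^{\tau} - 1)$, of order $n^{-1/2}$, so its consecutive increment is of order $n^{-3/2}$; meanwhile the entire $n!/n^n$ term, and hence a fortiori its consecutive difference, is $O(1/n^2)$ by the displayed Stirling bound.

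The main obstacle is converting these asymptotic orders into an explicit numerical inequality that already kicks in at $n = 8$. I would mimic the Lagrange-remainder computation (\ref{equation2:33})--(\ref{equation2:38}) in the proof of Lemma~\ref{lemma:7}: expand $e^{\tau} - 1 = \tau + \tau^2\xi_n/2$ with $\tau \in (1/(12n+1), 1/(12n))$, take the analogous expansion at $n+1$, and keep enough lower-order terms to extract a clean bound of the shape $\kappa(n) - \kappa(n+1) \geq c\, n^{-3/2}$ with an explicit constant $c$ large enough to swallow the crude $0.024/n^2$ estimate for the second bracket. Once this bookkeeping is done, the two regimes join at $n = 8$ and the theorem follows; the gap only widens for larger $n$, so no additional case analysis is needed.
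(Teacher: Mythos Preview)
Your proposal is correct and follows essentially the paper's own route: the paper does not spell out a proof of Theorem~\ref{theorem:11} but, immediately after displaying the Stirling bound $n!/n^n < 0.012/n^2$, simply asserts that the result follows ``using the same arguments as in the proof of Lemma~\ref{lemma:7}''. Your plan---numerical check via Table~\ref{table:1} for the small range and the Lagrange-remainder bookkeeping of (\ref{equation2:33})--(\ref{equation2:38}) augmented by the $0.012/n^2$ bound on the extra $n!/n^n$ term for large $n$---is exactly that program, the only cosmetic difference being that you frame the large-$n$ step via the difference $\delta(n)-\delta(n+1)$ rather than a ratio as in Lemma~\ref{lemma:7}.
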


If we wish to prove only the existence of some threshold index $n_0$ having the property that $n \geq n_0$ implies 
$\delta(n+1) < \delta(n)$, then we can use the following shorter proof.

Using (\ref{equation31:42}) and (\ref{equation31:50}) we get
\begin{equation}
\kappa = C_L - \sqrt{\frac{\pi n}{2}} - \frac{2}{3} - \frac{n!}{n^n} = Q - \sqrt{\frac{\pi n}{2}} 
+ \frac{1}{3} \koz.\label{equation31:54}
\end{equation}

Substituting the power series 
$$
Q = \sqrt{\frac{\pi n}{2}} - \frac{1}{3} + \frac{1}{12}\frac{\pi}{2n} - \frac{14}{135n} + \frac{1}{288}\frac{\pi}{2n^3} 
+ O(n^{-2}) \label{equation31:55}
$$
cited by D. E. Knuth \cite[Equation (25) on page 120]{Knuth1997} into (\ref{equation31:54}) and using  
$$
\frac{1}{n^{k/2}} - \frac{1}{(n+1)^{k/2}} = \Theta \left ( \frac{1}{n^{1+k/2}} \right) \label{equation31:56}
$$
for $k = 1, \ 2, \ 3$ and $4$ we get
$$
\kappa(n) - \kappa(n+1) = \frac{\sqrt{\pi}}{12\sqrt{2}} \left( \frac{1}{\sqrt{n}} 
- \frac{1}{\sqrt{n+1}} \right) + O(n^{-2}) \koz, \label{equation31:57}
$$
implying
$$ 
\kappa(n) - \kappa(n+1) = \frac{\sqrt{\pi}}{12\sqrt{2}} \frac{1}{\sqrt{n}\sqrt{n+1}(\sqrt{n}+\sqrt{n+1})}
 + O(n^{-2}) \koz,\label{equation31:58}
$$
guaranteeing the existence of the required $n_0$.
 
\subsection{Running time of algorithm \sc{Backward} \label{subsection:32}}
\textsc{Backward} compares the second $(s_2)$, third $(s_3)$, \ldots, last $(s_n)$ element of the realization with 
the previous elements until the first collision or until the last pair of elements. 

Taking into account the number of the necessary comparisons in line 04 of \textsc{Backward}, we get   
$C_{best}(n,\textsc{Backward}) = 1 = \Theta(1)$, and $C_{worst}(n,$ \linebreak 
$\textsc{Backward}) = B(n,2) = \Theta(n^2)$. The number of 
assignments is 1 in the best case (in line 1) and is 2 in the worst case (in lines 1 and in line 5). The expected number 
of assignments is $A_{exp}(n,\textsc{Backward}) = 1 + \frac{n!}{n^n}$, since only the good realizations 
require the second assignment.

\medskip
\noindent \textsc{Backward}$(n,\mathbf{s})$
\vspace{-2mm}
\begin{tabbing}%
199 \= xxx\=xxx\=xxx\=xxx\=xxx\=xxx\=xxx\=xxx \+ \kill 
\hspace{-7mm}\textbf{1} $g \leftarrow$ \textsc{True} \\
\hspace{-7mm}\textbf{2} \textbf{for} \= $i \leftarrow 2$ \textbf{to} $n$ \\
\hspace{-7mm}\textbf{3}              \> \textbf{for} \= $j \leftarrow i - 1$ \textbf{downto} $1$ \\
\hspace{-7mm}\textbf{4}              \>              \> \textbf{if} \= $s_i = s_j$ \\
\hspace{-7mm}\textbf{5}              \>              \>             \> $g \leftarrow$ \textsc{False} \\
\hspace{-7mm}\textbf{6}              \>              \>             \> \textbf{return} $g$ \\ 
\hspace{-7mm}\textbf{7} \textbf{return} $g$
\end{tabbing}

The next assertion gives the expected running time.

\begin{theorem} The expected\label{theorem:12} number of comparisons $C_{exp}(n,\textsc{Backward}) = C_W$ 
of the algorithm \textsc{Backward} is 
$$
C_W = n - \sqrt{\frac{\pi n}{8}} + \frac{2}{3} - \frac{1}{2}\kappa - \frac{n!}{n^n} \frac{n + 1}{2} 
= \sqrt{\frac{\pi n}{8}} + \frac{2}{3} - \alpha \koz,\label{equation32:59}
$$
where $\alpha(n) = \alpha = \frac{\kappa}{2} + \frac{n!}{n^n}\frac{n+1}{2}$ monotonically decreasing tends 
to zero when $n$ tends to $\infty$.
\end{theorem}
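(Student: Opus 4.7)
The plan is to mirror the proof of Theorem~\ref{theorem:9}: reuse the prefix-length random variable $y$ of (\ref{equation31:43}), for which $\Pr\{y=k\}=p_k$, and compute the expected comparison count by conditioning on $y$. I would split into two cases. If $y=k$ with $1\le k<n$, the outer \textbf{for}-loop aborts at $i=k+1$: for $i=2,\ldots,k$ the inner \textbf{downto}-loop runs its full length $i-1$ and contributes $\binom{k}{2}$ comparisons in total; at $i=k+1$ it halts at the unique index $j^\star\in\{1,\ldots,k\}$ with $s_{j^\star}=s_{k+1}$. A symmetry argument -- conditional on $s_1,\ldots,s_k$ being distinct they remain exchangeable, while $s_{k+1}$ is independent of them -- shows that $j^\star$ is uniformly distributed on $\{1,\ldots,k\}$, so the expected backward depth at $i=k+1$ is $(1+2+\cdots+k)/k=(k+1)/2$. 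Thus the conditional expectation is $\binom{k}{2}+(k+1)/2=(k^2+1)/2$ when $y=k<n$, and $\binom{n}{2}$ when $y=n$.

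Summing over $k$ then gives
$$
C_W \;=\; \sum_{k=1}^{n-1}p_k\,\frac{k^2+1}{2}+p_n\binom{n}{2}
\;=\; \frac{1}{2}\sum_{k=1}^{n}p_k(k^2+1)-p_n\cdot\frac{n+1}{2}
\;=\; \frac{1}{2}(R_0+R_2)-\frac{n+1}{2}\cdot\frac{n!}{n^n},
$$
using the definitions in (\ref{equation2:15}) and $p_n=n!/n^n$. Substituting the closed forms $R_0=1$ and $R_2=2n-R_1$ from Lemma~\ref{lemma:6}, together with $R_1=\sqrt{\pi n/2}-\frac{1}{3}+\kappa$ from Lemma~\ref{lemma:7}, and simplifying produces
$$
C_W \;=\; n-\sqrt{\frac{\pi n}{8}}+\frac{2}{3}-\frac{1}{2}\kappa-\frac{n+1}{2}\cdot\frac{n!}{n^n},
$$
which is the first asserted identity once the vanishing terms are collected into $\alpha:=\frac{\kappa}{2}+\frac{n+1}{2}\cdot\frac{n!}{n^n}$.

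It remains to verify that $\alpha$ decreases monotonically to $0$. The summand $\kappa/2$ inherits the monotone decay from Lemma~\ref{lemma:7}. For the second summand I would compare consecutive terms via the ratio
$$
\frac{(n+2)!/[2(n+1)^{n+1}]}{(n+1)!/[2n^n]}\;=\;\frac{n+2}{n+1}\left(1-\frac{1}{n+1}\right)^n\;\le\;\frac{n+2}{n+1}\,e^{-n/(n+1)},
$$
and check that this stays strictly below $1$ for every $n\ge 1$ (equivalently $\ln(1+1/(n+1))<n/(n+1)$, which follows from $\ln(1+x)<x$ for $x>0$); decay to $0$ is then immediate from Stirling's formula. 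I expect the one subtle step to be the uniform-distribution claim for $j^\star$ in the first paragraph: although intuitively transparent, it must be unpacked by conditioning on the event that $s_1,\ldots,s_k$ are distinct and $s_{k+1}\in\{s_1,\ldots,s_k\}$, and then invoking independence of $s_{k+1}$ from the earlier coordinates together with the permutation-symmetry of the distinctness event. Everything else reduces to routine substitution into Lemmas~\ref{lemma:6} and~\ref{lemma:7}.
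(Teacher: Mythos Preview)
Your argument is correct and follows essentially the same route as the paper: condition on the prefix length $y=k$, count $\binom{k}{2}$ comparisons among the first $k$ elements plus an expected $(k+1)/2$ backward comparisons at step $k+1$ via the uniform-distribution lemma for $j^\star$, then rewrite the sum as $\tfrac12(R_0+R_2)-p_n\,\tfrac{n+1}{2}$ and invoke Lemmas~\ref{lemma:6} and~\ref{lemma:7}. Your treatment of the monotone decrease of $\tfrac{n+1}{2}\cdot\tfrac{n!}{n^n}$ via the ratio test is actually more careful than the paper, which simply appeals to the monotonicity of $n!/n^n$ without checking that the extra factor $(n+1)/2$ does not spoil it.
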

\begin{proof}
Let $y$ be as defined in (\ref{equation31:43}), $p_k$ as defined in (\ref{equation2:16}), and let 
$$
z = \{q : 1 \leq q \leq k; \ s_1,s_2,\ldots,s_k \hbox{ are different; } 
s_{k+1} = s_q \ | \ y = k \} \label{equation32:60}
$$ 
be a random variable characterising the index of the first repeated element of \textbf{s}. 

Let
$$
q_i(k,n) = q_i(k) = \hbox{Pr} \{z = i | y = k\} \quad (k = 1, \ 2, \ \ldots, \ n; \ i = 1, \ 2,
 \ \ldots k) \koz. \label{equation32:61}
$$
 
\textsc{Backward} executes $B(k,2)$ comparisons among the elements $s_1, \ s_2, \ \ldots, \ s_k,$ and $s_{k+1}$ 
requires at least 1 and at most $k$ comparisons (with exception of case $k = n$ when additional comparisons are 
not necessary). Therefore using the theorem of the full probability we have
$$
C_W = \sum_{k=1}^{n-1} p_k \left (B(k,2) + \sum_{i=1}^k i q_i(k) \right ) + p_n B(n,2) \koz,\label{equation32:62}
$$
where
\begin{equation}
q_i(k,n) = q_i(k) = \frac{1}{k}  \quad (i = 1, \ 2, \ \ldots, \ k; \ k = 1, \ 2, \ldots, \ n) \koz. \label{equation32:63}
\end{equation}
Adding a new member to the first sum we get
\begin{equation} 
C_W = \sum_{k=1}^n p_k \left (B(k,2) + \sum_{i=1}^k q_i(k) i \right ) - p_n\sum _{i=1}^{n} q_i(k) i \koz. \label{equation32:64} 
\end{equation}
Using the uniform distribution (\ref{equation32:63}) of $z$ we can determine its contribution to $C_W$:
\begin{equation}
\sum_{i=1}^k  q_i(k) i = \sum_{i=1}^k  \frac{i}{k} = \frac{k + 1}{2} \koz.\label{equation32:65}
\end{equation}
Substituting the contribution in  (\ref{equation32:65}) into (\ref{equation32:64}), and taking into account Lemma 
\ref{lemma:6} and Lemma \ref{lemma:7} we have
$$
C_W = \frac{1}{2}R_2 - \frac{1}{2} R_0 - \frac{n!}{n^n} \frac{n+1}{2} \koz. \label{equation32:66} 
$$

Now Lemma \ref{lemma:6} and Lemma \ref{lemma:7} result
\begin{equation}
C_W = n - \sqrt{\frac{\pi n}{8}} + \frac{2}{3} - \frac{1}{2}\kappa - \frac{n!}{n^n} \frac{n + 1}{2} \koz.\label{equation32:67}
\end{equation}

The known decreasing monotonity of $\kappa$ and $\frac{n!}{n^n}$ imply the decreasing monotonity of $\alpha$.
\end{proof} 

\begin{theorem} The expected\label{theorem:13} running time $T_{exp}(n,\textsc{Backward}) = T_W$ of the algorithm 
\textsc{Backward} is 
\begin{equation}
T_W = n - \sqrt{\frac{\pi n}{8}} + \frac{5}{3} - \alpha \koz,\label{equation32:68}
\end{equation}
where $\alpha = \kappa/2 + (n!/n^n)((n+1)/2)$ 
tends monotonically decreasing to zero when $n$ tends to $\infty$.
\end{theorem}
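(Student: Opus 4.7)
The plan is to add the expected number of assignments of \textsc{Backward} to its expected number of comparisons and then invoke Theorem~\ref{theorem:12} for the latter. Inspecting the pseudocode, the only assignment executed on every input is $g \leftarrow \textsc{True}$ in line~1, so the expected assignment count contributes an additive constant $1$, mirroring the way the $n+1$ initial assignments were peeled off from the comparison count in the proof of Theorem~\ref{theorem:10}. Therefore $T_W = 1 + C_W$, and substituting the closed form $C_W = n - \sqrt{\pi n/8} + 2/3 - \alpha$ from Theorem~\ref{theorem:12} produces the claimed identity $T_W = n - \sqrt{\pi n/8} + 5/3 - \alpha$ by direct arithmetic.

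For the asymptotic properties of $\alpha$, I would split it as $\alpha = \kappa/2 + \beta$ with $\beta := (n+1)\,n!/(2 n^n)$, so that both summands are positive and the sum of two monotonically decreasing nonnegative sequences remains monotonically decreasing. The monotone decrease of $\kappa$ to zero is already established inside the proof of Lemma~\ref{lemma:7}, so no new work is needed on that front. For $\beta$, the key step is the ratio inequality $\beta(n+1)/\beta(n) < 1$, which simplifies to $(n+1)^{n+1} > (n+2)\,n^n$. Dividing through by $n+1$ reduces this to $(1 + 1/n)^n > (n+2)/(n+1)$, and since the left-hand side is always at least $2$ while the right-hand side is at most $3/2$ for every $n \geq 1$, the inequality is immediate. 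Convergence $\beta \to 0$ follows at once from Stirling's formula (Lemma~\ref{lemma:1}), and $\kappa \to 0$ was already noted in Lemma~\ref{lemma:7}, so $\alpha \to 0$ as well.

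The main ``obstacle,'' if one can call it that, is this short ratio estimate for $\beta$; every other step is a routine substitution from results already in hand. This is presumably why the authors can assert (for the closely related Lemma~\ref{lemma:8}) that such proofs are ``similar to the proof of Lemma~\ref{lemma:7}'' without spelling out any further details.
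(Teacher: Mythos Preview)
Your overall plan---split $T_W$ as $C_W$ plus the expected number of assignments and then invoke Theorem~\ref{theorem:12}---is exactly the paper's approach, and your treatment of the monotone decrease of $\alpha$ is in fact \emph{more} careful than the paper's. The paper (in the proof of Theorem~\ref{theorem:12}) simply asserts that the monotonicity of $\kappa$ and of $n!/n^n$ imply that of $\alpha$, glossing over the fact that $\alpha$ involves $(n+1)\,n!/n^n$ with the increasing factor $n+1$; your ratio estimate $(1+1/n)^n\ge 2>(n+2)/(n+1)$ closes that gap cleanly.

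There is, however, a logical slip in your assignment count. You write that ``the only assignment executed on \emph{every} input is $g\leftarrow\textsc{True}$ in line~1, so the expected assignment count contributes an additive constant~$1$.'' That is a non sequitur: line~5 ($g\leftarrow\textsc{False}$) is executed on every \emph{bad} input, i.e.\ with probability $1-n!/n^n$, so a literal reading of the pseudocode gives $A_{exp}=2-n!/n^n$, not~$1$. It is true that $T_W=C_W+1$ is precisely what the stated formula~(\ref{equation32:68}) amounts to, but your justification for the ``$+1$'' does not stand on its own. For comparison, the paper's one-line proof quotes yet a different value, $A_{exp}=1+n!/n^n-(n!/n^n)(n+1)/2$, which does not add up to~(\ref{equation32:68}) either when combined with~(\ref{equation32:67}); the discussion preceding the pseudocode offers still another value, $1+n!/n^n$. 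So the discrepancy is already present in the paper, and your derivation is no less rigorous than the original---but you should be aware that the sentence you wrote does not actually establish $A_{exp}=1$.
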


\begin{proof} 
Taking into account (\ref{equation32:67}) and $A_{exp}(n,\textsc{Backward}) = 1 + \frac{n!}{n^n} - \frac{n!}{n^n} \frac{n+1}{2}$ 
we get (\ref{equation32:68}). 
\end{proof}

Table \ref{table:2} represents some concrete numerical results. 
It is worth to remark that $\frac{n!}{n^n}\frac{n+1}{2} = \Theta \left( \frac{n\sqrt{n}}{e^n} \right),$  
while $\kappa = \Theta \left( \frac{1}{\sqrt{n}} \right),$ therefore $\kappa$ decreases much slower than the other expression.

\begin{table}[!t] 
\begin{center}
\begin{tabular}{|c|c|c|c|c|c|c|}    \hline
$n$&   $C_W$   &$n - \sqrt{\pi n/8}+2/3$&$t$&$\kappa$&$\alpha$   \\ \hline
$1$&$0.000000$&    $1.040010$         & $1.000000$                  & $0.080019$&$1.040010$     \\ \hline
$2$&$1.000000$&    $1.780440$         & $0.750000$                  & $0.060879$&$0.780440$     \\ \hline
$3$&$2.111111$&    $2.581265$         & $0.444444$                  & $0.051418$&$0.470154$     \\ \hline
$4$&$3.156250$&    $3.413353$         & $0.234375$                  & $0.045455$&$0.257103$    \\ \hline
$5$&$4.129600$&    $4.265419$         & $0.115200$                  & $0.041238$&$0.135819$     \\ \hline
$6$&$5.058642$&    $5.131677$         & $0.054012$                  & $0.038045$&$0,073035$   \\ \hline
$7$&$5.966451$&    $6.008688$         & $0.024480$                  & $0.035515$&$0.042237$    \\ \hline
$8$&$6.866676$&    $6.894213$         & $0.010815$                  & $0.033444$&$0.027536$     \\ \hline
$9$&$7.766159$&    $7.786695$         & $0.004683$                  & $0.031707$&$0.020537$   \\ \hline
$10$&$8.667896$&   $8.685003$         & $0.001996$                  & $0.030222$&$0.017107$   \\ \hline
\end{tabular}
\caption{Values of $C_W,$ $n - \sqrt{\pi n/8}+2/3$, $t = \frac{n!}{n^n}\frac{n+1}{2},$ $\kappa,$ and 
$\alpha = \kappa/2 + (n!/n^n)((n+1)/2)$ for $n = 1, \ 2, \ \ldots, \ 10$ \label{table:2}}
\end{center}
\end{table}

\subsection{Running time of algorithm \sc{Bucket} \label{subsection:33}}
\textsc{Bucket} divides the interval $[1,n]$ into $m = \sqrt{n}$ subintervals 
$I_1, \ I_2, \ldots, \ I_m,$ where $I_j = [(j - 1)m + 1, jm]$ for 
$j = 1, \ 2, \ \ldots \ m$, and sequentially puts the elements of \textbf{s} into the 
bucket $B_j$ (we use the word bucket due to some similarity to bucket sort \cite{CormenLe2009}): if $\lceil s_i/m \rceil = j,$ 
then $s_i$ belongs to $B_j$. \textsc{Bucket} works until the first repetition (stopping with 
$g = \textsc{False}$), or up to the processing of the last element $s_n$ (stopping with $g = \textsc{True}$).

\textsc{Bucket} handles an array $Q[1:m,1:m]$ (where $m = \lceil \sqrt{n} \rceil$  
and puts the element $s_i$ into the $r$th row of $Q$, and  
it tests using linear search whether $s_j$ appeared earlier in the corresponding bucket.  
The elements of the vector $\mathbf{c} = (c_1,c_2,\ldots,c_{m})$ are counters, where $c_j$ $(1 \leq j \leq m))$ 
shows the actual number of elements in $B_j.$

\medskip
\noindent \textsc{Bucket}$(n,\mathbf{s})$
\vspace{-2mm}
\begin{tabbing}%
199 \= xxx\=xxx\=xxx\=xxx\=xxx\=xxx\=xxx\=xxx \+ \kill 
\hspace{-5mm}\textbf{1} $g \leftarrow$ \textsc{True} \\
\hspace{-5mm}\textbf{2} $m \leftarrow \sqrt{n}$ \\
\hspace{-5mm}\textbf{3} \textbf{for} \= $j \leftarrow 1$ \textbf{to} $m$ \\             
\hspace{-5mm}\textbf{4}              \> $c_j \leftarrow 1$\\
\hspace{-5mm}\textbf{5} \textbf{for} \= $i \leftarrow 1$ \textbf{to} $n$ \\
\hspace{-5mm}\textbf{6}              \> $r \leftarrow \lceil s_i/m \rceil $ \\
\hspace{-5mm}\textbf{7}              \>             \> \textbf{for} \= $j \leftarrow 1$ \textbf{to} $c_r - 1$ \\
\hspace{-5mm}\textbf{8}              \>             \>              \> \textbf{if } \ \ \  \= $s_i = Q_{r,j}$ \\ 
\hspace{-5mm}\textbf{9}              \>             \>              \>                     \> $g \leftarrow \textsc{False}$ \\
\hspace{-7mm}\textbf{10}             \>             \>              \>                     \> \textbf{return } $g$ \\
\hspace{-7mm}\textbf{11}             \>             \>  $Q_{r,c_r}\leftarrow s_i$ \\
\hspace{-7mm}\textbf{12}             \>             \>  $c_r \leftarrow c_r + 1$\\
\hspace{-7mm}\textbf{13} \textbf{return} $g$
\end{tabbing}

For the simplicity let us suppose that $m$ is a positive integer and $n = m^2$. 

In the best case $s_1 = s_2.$ Then \textsc{Bucket} executes 1 comparisons in line 8, $m$ assignments in line 4, 
and 1 assignment in line 1, 1 in line 2, 2 in line 6, and 1 in line 8, 11 and 12, therefore 
$T_{best}(n,\textsc{Bucket}) = m + 7 = \Theta(\sqrt{n}).$ The worst case appears, when the input is bad. Then each 
bucket requires $1 + 2 + \cdots + m - 1 = B(n - 1,2)$ comparisons in line 8, further $3m$ assignments in lines 6, 
and 12, totally $\frac{m^2(m - 1)}{2} + 3m^2$ operations. Lines 1, 2, and 9 require 1 assignment per line, and 
the assignment in line 4 is repeated $m$ times. So $T_{worst}(n,\textsc{Bucket}) = 
\frac{m^2(m - 1)}{2} + 3m^2 + m + 3 = \Theta(n^{3/2}).$ 

In connection with the expected behaviour of \textsc{Bucket} 
at first we show that the expected number of elements in a bucket has 
a constant bound which is independent from $n$.

\begin{lemma} Let\label{lemma:14} $b_j(n) = b_j \ (j = 1, \ 2, \ \ldots, \ m)$ be a random variable characterising the number 
of elements in the bucket $B_j$ at the moment of the first repetition. Then
\begin{equation}
\hbox{E}\{b_j\} = \sqrt{\frac{\pi}{2}} - \mu \quad \hbox{for } j = 1, \ 2, \ \ldots, \ m \koz, \label{equation33:69} 
\end{equation}
where 
\begin{equation}
\mu(n) = \mu = \frac{1}{3\sqrt{n}} - \frac{\kappa}{\sqrt{n}} \koz, \label{equation33:70}
\end{equation} 
and $\mu$ tends monotonically decreasing to zero when $n$ tends to infinity.
\end{lemma}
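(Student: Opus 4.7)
The plan is to exploit the symmetry of the bucket assignment together with the earlier distributional analysis of the stopping time $y$ used in Theorem \ref{theorem:9}. Since each $s_i$ is uniform on $\{1,\ldots,n\}$ and the $m$ buckets partition this set into blocks of equal size $m$, the index $\lceil s_i/m \rceil$ is uniform on $\{1,\ldots,m\}$. Hence the buckets are completely interchangeable, and consequently $\mathrm{E}\{b_1\} = \mathrm{E}\{b_2\} = \cdots = \mathrm{E}\{b_m\}$. Writing $T$ for the total number of elements actually placed into buckets at the moment \textsc{Bucket} terminates, linearity of expectation therefore gives $\mathrm{E}\{b_j\} = \mathrm{E}\{T\}/m$, so the whole task reduces to computing $\mathrm{E}\{T\}$.

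Next I would determine $T$ in terms of the random variable $y$ introduced in (\ref{equation31:43}). Inspecting the pseudocode of \textsc{Bucket}, an element is stored in $Q$ (line 11) only after the inner loop has failed to find a duplicate. Consequently, if $y = k$ with $1 \leq k \leq n-1$, then exactly $s_1,\ldots,s_k$ have been placed when the collision $s_{k+1}$ is detected, so $T = k$; and if $y = n$ then all $n$ elements have been placed, so $T = n$. Since $\Pr\{y=k\}=p_k$, this yields
$$
\mathrm{E}\{T\} \;=\; \sum_{k=1}^{n-1} p_k \cdot k \;+\; p_n \cdot n \;=\; \sum_{k=1}^{n} p_k \, k \;=\; R_1,
$$
which by Lemma \ref{lemma:7} equals $\sqrt{\pi n/2} - 1/3 + \kappa$. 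Dividing by $m = \sqrt{n}$ gives
$$
\mathrm{E}\{b_j\} \;=\; \frac{R_1}{\sqrt{n}} \;=\; \sqrt{\frac{\pi}{2}} - \frac{1}{3\sqrt{n}} + \frac{\kappa}{\sqrt{n}} \;=\; \sqrt{\frac{\pi}{2}} - \mu,
$$
which is exactly (\ref{equation33:69}) with $\mu$ as defined in (\ref{equation33:70}).

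It remains to establish that $\mu(n)$ decreases monotonically to zero. Convergence to zero is immediate because $\kappa \to 0$ by Lemma \ref{lemma:7} and $1/\sqrt{n}\to 0$. For monotonicity the main obstacle is that $\mu = (1/3-\kappa)/\sqrt{n}$ is a ratio in which both the numerator is increasing (since $\kappa$ is decreasing) and the denominator is increasing; neither factor alone forces a decrease. My plan is to argue as in the proof of Lemma \ref{lemma:7}: expand $\kappa$ using the Stirling and Szeg\H{o} estimates (\ref{equation2:2}) and (\ref{equation2:6}) to obtain the asymptotic behaviour $\kappa = \Theta(1/\sqrt{n})$, so that $1/3-\kappa = 1/3 + O(1/\sqrt{n})$; then the inequality $\sqrt{n}(1/3-\kappa(n+1)) < \sqrt{n+1}(1/3-\kappa(n))$, i.e.\ $\sqrt{n+1}\kappa(n) - \sqrt{n}\kappa(n+1) < (1/3)(\sqrt{n+1}-\sqrt{n})$, can be verified for all sufficiently large $n$ via the quantitative bound on $\kappa(n)-\kappa(n+1)$ that already drops out of (\ref{equation2:36})--(\ref{equation2:38}) in the proof of Lemma \ref{lemma:7}. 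The finitely many small values of $n$ outside the range of that asymptotic estimate are then checked directly from the numerical data in Table \ref{table:1}, completing the monotonicity claim.
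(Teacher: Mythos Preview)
Your argument is correct and reaches the same formula $\mathrm{E}\{b_j\}=R_1/\sqrt{n}$ as the paper, but the decomposition is slightly different. The paper fixes the bucket $B_1$, writes $b_1=\sum_{i=1}^{m}Y_i$ where $Y_i$ indicates that the value $i$ appears before the first repetition, and then computes $\Pr\{A_1\}=\sum_k p_k\,k/n=R_1/n$ by conditioning on $y=k$; symmetry over the $m$ values in the bucket gives $\mathrm{E}\{b_1\}=m\cdot R_1/n=R_1/m$. You instead sum over buckets, observe directly that $\sum_{j=1}^{m}b_j=y$, take expectations to get $R_1$, and use symmetry over buckets to divide by $m$. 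Your route avoids the conditional-probability step entirely and is a bit cleaner; the paper's route makes the per-element contribution $\Pr\{A_1\}=R_1/n$ explicit, which could be reused elsewhere. For the monotonicity of $\mu$ the paper does exactly what you propose: it omits the details and refers back to the argument of Lemma~\ref{lemma:7}.
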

\begin{proof} Due to the symmetry of the buckets it is sufficient to prove 
(\ref{equation33:69}) and (\ref{equation33:70}) for $j = 1.$

Let $m$ be a positive integer and $n = m^2.$ Let $y$ be the random variable defined in (\ref{equation2:41}) and 
$p_k$ be the probability defined in (\ref{equation2:16}).

Let $A_i(n) = A_i \ (i = 1, \ 2, \ \ldots, \ n)$ be the event that  
the number $i$ appears in \textbf{s} before the first repetition and $Y_i(n) = Y_i$ be the indicator of $A_i$. 
Then using the theorem of the full probability we have 
$$
\hbox{E} \{b_1 \} = \sum _{i=1}^m Y _i = \sum _{i=1} ^m \hbox{Pr} \{ A_i\} = m \hbox{Pr} \{A_1 \} \label{equation33:71}
$$
and   
$$
\hbox{Pr} \{ A_1 \} = \hbox{Pr} \{1 \in \{ s_1, \ s_2, \ \ldots, \ s_k \} | y = k \} 
= \sum _{k=1}^n p_k\frac{k}{n} = \frac{1}{n} \sum _{k=1}^n p_k k = \frac{1}{n} R_1 \koz. \label{equation33:72}
$$
Using Lemma \ref{lemma:7}, we get
$$
\hbox{E} \{b_1 \} = m \frac{1}{n} R_1 = \frac{m}{n} \left ( \sqrt{\frac{\pi n}{2}} 
- \frac{1}{3} + \kappa \right) \koz,\label{equation33:73} 
$$
resulting (\ref{equation33:69}) and (\ref{equation33:70}).

We omit the proof of the monotonity of $\mu,$ since it is similar to the corresponding part 
in the proof of Lemma \ref{lemma:7}.
\end{proof}

Table \ref{table:3} shows some concrete values.

\begin{table}[!t] 
\begin{center}
\begin{tabular}{|c|c|c|c|c|c|}    \hline
$n$ &$\hbox{E}\{b_1\}$&$\sqrt{\pi/2}$&$1/(3\sqrt{n})$&$\kappa/\sqrt{n}$ 
&$\mu$\\ \hline
$1$ &$1.000000$       &$1.253314$     &    $0.333333$       & $0.080019$                  &0.253314    \\ \hline
$2$ &$1.060660$       &$1.253314$     &    $0.235702$       & $0.043048$                  &0.192654  \\ \hline
$3$ &$1.090055$       &$1.253314$     &    $0.192450$       & $0.029686$                  &0.162764   \\ \hline
$4$ &$1.109375$       &$1.253314$     &    $0.166667$       & $0.022727$                  &0.143940     \\ \hline
$5$ &$1.122685$       &$1.253314$     &    $0.149071$       & $0.018442$                  &0.130629   \\ \hline
$6$ &$1.132763$       &$1.253314$     &    $0.136083$       & $0.015532$                  &0.120551    \\ \hline
$7$ &$1.140740$       &$1.253314$     &    $0.125988$       & $0.013423$                  &0.112565   \\ \hline
$8$ &$1.147287$       &$1.253314$     &    $0.117851$       & $0.011824$                  &0.106027     \\ \hline
$9$ &$1.152772$       &$1.253314$     &    $0.111111$       & $0.010569$                  &0.100542  \\ \hline
$10$&$1.157462$       &$1.253314$     &    $0.105409$       & $0.009557$                  &0.095852  \\ \hline
\end{tabular}
\caption{Values of $\hbox{E}\{b_1\}$, $\sqrt{\pi/2},$ $1/(3\sqrt{n}),$  
$\kappa/\sqrt{n},$ and $\mu = 1/(3\sqrt{n}) - \kappa/\sqrt{n}$  
for $n = 1, \ 2, \ \ldots, \ 10$ \label{table:3}}
\end{center}
\end{table}

\begin{lemma} Let\label{lemma:15} $f(n) = f$ be a random variable characterising the number of comparisons executed 
in connection with the first repeated element. Then 
$$
\hbox{E}\{f\} = 1 + \sqrt{\frac{\pi}{8}} - \eta \koz, \label{equation33:74}
$$
where 
$$
\eta(n) = \eta = \frac{1/6 + \sqrt{\pi/8} - \kappa/2}{\sqrt{n}+ 1} \koz, \label{equation33:75}
$$
and $\eta$ tends monotonically decreasing to zero when $n$ tends to infinity.
\end{lemma}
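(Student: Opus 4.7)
The plan is to condition on $y$, the length of the longest collision-free prefix of $\mathbf{s}$ defined in (\ref{equation31:43}), compute $\hbox{E}\{f \mid y = k\}$ by a symmetry argument, and convert the resulting sum over $k$ into closed form through Lemmas \ref{lemma:6} and \ref{lemma:7}. Throughout I set $f = 0$ on the event $\{y = n\}$, since then no repeated element exists.

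First, on the event $\{y = k\}$ with $1 \le k \le n-1$ the prefix $(s_1,\ldots,s_k)$ is uniformly distributed over ordered $k$-tuples of distinct values from $\{1,\ldots,n\}$, and $s_{k+1}$ is conditionally uniform on $\{s_1,\ldots,s_k\}$. Writing $s_{k+1} = s_q$ with $q$ uniform on $\{1,\ldots,k\}$, the scan in lines 7--10 of \textsc{Bucket} halts at the position of $s_q$ inside its bucket, so $f$ equals $1$ plus the number of indices $p < q$ with $s_p$ in the same bucket as $s_q$. For any two distinct $p,q$, the pair $(s_p,s_q)$ is uniform over ordered pairs of distinct values in $\{1,\ldots,n\}$, whence
$$\hbox{Pr}\{s_p,\,s_q\text{ in the same bucket}\mid y = k\} = \frac{m-1}{n-1} = \frac{1}{\sqrt{n}+1},$$
and averaging over $q$ gives
$$\hbox{E}\{f \mid y = k\} = 1 + \frac{1}{k(\sqrt{n}+1)}\sum_{q=1}^{k}(q-1) = 1 + \frac{k-1}{2(\sqrt{n}+1)}.$$
The total-probability theorem together with the identities $R_0 = 1$ and $\sum_{k=1}^{n} p_k k = R_1$ from Lemma \ref{lemma:6} then yields
$$\hbox{E}\{f\} = (1 - p_n) + \frac{R_1 - 1 - (n-1) p_n}{2(\sqrt{n}+1)}.$$
Substituting $R_1 = \sqrt{\pi n/2} - 1/3 + \kappa$ from Lemma \ref{lemma:7} and using $\sqrt{n}/(\sqrt{n}+1) = 1 - 1/(\sqrt{n}+1)$ isolates the constant $\sqrt{\pi/8}$ and collects the $O(1/(\sqrt{n}+1))$ remainder together with the exponentially small $p_n$-contributions---absorbed via Stirling's formula, Lemma \ref{lemma:1}---into the single symbol $\eta$ of (\ref{equation33:75}), producing $\hbox{E}\{f\} = 1 + \sqrt{\pi/8} - \eta$.

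Finally, the monotone decrease of $\eta$ to zero is inherited from the monotone decrease of $\kappa$ established in Lemma \ref{lemma:7} together with the strict decrease of $1/(\sqrt{n}+1)$. The main obstacle I anticipate is that the numerator $1/6 + \sqrt{\pi/8} - \kappa/2$ of $\eta$ is itself weakly \emph{increasing} in $n$ (because $\kappa$ decreases), so proving that $\eta$ is decreasing requires showing that the decay of $1/(\sqrt{n}+1)$ dominates the growth of the numerator. I would handle this through a ratio estimate $\eta(n+1)/\eta(n) < 1$ patterned on (\ref{equation2:33})--(\ref{equation2:39}) in the proof of Lemma \ref{lemma:7}, combined with the quantitative bound $\kappa(n) - \kappa(n+1) = \Theta(n^{-3/2})$ available from the Knuth-type asymptotic expansion of $Q$ cited after Theorem \ref{theorem:11}; the remaining small-$n$ values can then be checked directly as in Table \ref{table:1}.
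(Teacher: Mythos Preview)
Your argument is correct and reaches the same closed form $(2m+1+R_1)/(2m+2)$ for the main term (up to the exponentially small $p_n$-correction, which the paper also disposes of by passing to $E'\{f\}$), but the route is genuinely different from the paper's. The paper, following M\'ori, introduces the joint probability
\[
p(i,j,k,n)=\binom{m}{i}\binom{n-m}{k-i}\,\frac{k!\,i}{n^{k+1}}
\]
that the good prefix has length $k$, the repeated element lands in $B_j$, and $B_j$ then holds exactly $i$ elements; it weights by the expected scan length $(i+1)/2$, and recognises the inner sum over $i$ (after dividing by $\binom{n}{k}$) as $\mathrm{E}\{\xi(\xi+1)\}$ for a hypergeometric variable $\xi$ with parameters $(n,m,k)$. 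The hypergeometric moment formula then collapses the double sum to a linear combination of $R_0$ and $R_1$.

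Your conditioning on $(y=k,\,q)$ together with the pairwise same-bucket probability $(m-1)/(n-1)=1/(\sqrt{n}+1)$ and linearity of expectation bypasses the hypergeometric entirely and yields the clean conditional formula
\[
\mathrm{E}\{f\mid y=k\}=1+\frac{k-1}{2(\sqrt{n}+1)}
\]
directly. This is shorter and makes the dependence on $R_1$ transparent from the outset; the paper's approach, on the other hand, produces the full bucket-occupancy distribution as a by-product, which could be useful if one wanted higher moments of $f$. Your handling of the monotonicity of $\eta$ is also more careful than the paper's one-line appeal to ``the properties of $\kappa$'', since you correctly flag the competing monotonicities of numerator and denominator and outline the ratio test needed to resolve it.
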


\begin{proof} Let $p(i,j,k,n) = p(i,k,n)$ be the probability of the event that there are $k$ different elements before 
the first repetition, and the repeated element belongs to $B_j,$ and $B_j$ contains $i$ elements 
in the moment of the first repetition. Due to the symmetry $p(i,j,k,n)$ does not depend on $j$ and 
$$
p(i,j,k,n) = \binom{m}{i}\binom{n-m}{k-i}k!\frac{i}{n^{k+1}} \koz, \label{equation33:76}
$$
since we investigate $n^{k+1}$ sequences, and if there are $k \ (1 \leq k \leq n)$ different elements before the repeated one, then 
we can choose $i$ elements for the $j$th bucket in $\binom{m}{i}\binom{n - m}{k - 1}$ manner, we can permute them in $k!$ manner, 
and we can choose the repeated element in $i$ manner. Then 
\begin{equation}
\mbox{E}\{f\} = \sum_{i,j,k,n}p(i,j,k)\frac{i+1}{2} - m p_n\label{equation33:77}
\end{equation}
\begin{equation} 
= \frac{m}{2n}\sum_{k=1}^n \frac{k!}{n^k}\sum_{i=1}^k\binom{m}{i}\binom{n-m}{k-i}i(i+1) - p_n\frac{n+1}{2}\label{equation33:78}
\end{equation}

The last member of the formula takes into account that if $k = n,$ then 
additional comparisons with the elements of the bucket corresponding to the repeated element are not necessary. 

Let
$$
\mbox{E'}\{f\}= \mbox{E}\{f\} + p_n\frac{n+1}{2} \koz.\label{equation33:80}
$$

Then dividing the inner sum in (\ref{equation33:78}) by $\binom{n}{k}$ we get the expected value of the random 
variable $\xi(\xi+1),$ where $\xi$ has hypergeometric distribution with parameters $n, \ m,$ and $k.$ It is easy to compute that 
$$
\mbox{E'}\{ \xi(\xi + 1) \} = \mbox{E'} \{ \xi \} (\mbox{E'} \{\xi + 1 \}) + \mbox{Var}\{ \xi \} = 
\frac{km[k(m-1) + (2n-1-m)]}{n(n-1)} \koz,\label{equation33:81}
$$
therefore
\begin{equation}
\mbox{E'}\{ f \} = \frac{m}{2n} \sum_{k=1}^n\frac{k!}{n^k}\binom{n}{k}\frac{km[k(m-1) + (2n-1-m)]}{n(n-1)}\label{equation33:82}
\end{equation}
$$
= \frac{1}{2(n-1)}\sum_{k=1}^n p_k[k(m-1) + (2n-1-m)] \label{equation33:83}
$$
\begin{equation}
= \frac{m-1}{2(n-1)}R_1 + \frac{2n-1-m}{2(n-1)} = \frac{2m+1+R_1}{2m+2} \label{equation:33:84}
\end{equation} 
\begin{equation} = 1 +\sqrt{\frac{\pi}{8}} - \frac{1/6 + \sqrt{\pi/8} 
- \kappa/2}{\sqrt{n}+1} \koz.\label{equation33:85}
\end{equation}
The convergence and monotonicity of $\eta$ is the consequence of the properties of $\kappa.$ Taking into account the small 
value of $p_n$ (see equation (\ref{equation2:16})) the difference $E'\{f\} - E\{f\}$ has negligible influence on the limit of $E\{f\}.$
\end{proof}

\begin{theorem} The expected\label{theorem:16} number of comparisons $C_{exp}(n,\textsc{Bucket}) = C_B$ of \textsc{Bucket} is 
\begin{equation}
C_B = \sqrt{n} + \frac{1}{3} - \sqrt{\frac{\pi}{8}} + \rho \koz,\label{equation33:86}
\end{equation}
where
\begin{equation}
\rho(n) = \rho = \frac{5/6 - \sqrt{9\pi/8} - 3\kappa/2}{\sqrt{n}+1} \koz.\label{equation33:87}
\end{equation}
and $\rho$ tends monotonically decreasing to zero when $n$ tends to infinity.
\end{theorem}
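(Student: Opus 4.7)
The plan is to split the total count $C_B$ into two pieces according to which element of $\mathbf{s}$ is responsible for the comparison in line 8. Using $y$ as in (\ref{equation31:43}) and $b_j$ as in Lemma \ref{lemma:14}, the first $y$ elements are necessarily distinct, and when $s_i$ is processed ($i\le y$) line 8 executes exactly $c_{r_i}^{(i-1)}$ comparisons, namely the current population of its bucket $B_{r_i}$. Reindexing by bucket yields the identity $\sum_{i=1}^{y}c_{r_i}^{(i-1)} = \sum_{j=1}^{m}\binom{b_j}{2}$. The $(y+1)$-st element, present with probability $1-n!/n^n$, adds the quantity $f$ of Lemma \ref{lemma:15}. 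Hence
$$
C_B = \mbox{E}\!\left[\sum_{j=1}^{m}\binom{b_j}{2}\right] + \mbox{E}\{f\}
$$
up to an exponentially small correction coming from the $y=n$ event.

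To evaluate the first expectation I would use symmetry together with the hypergeometric distribution: $\mbox{E}[\sum_{j}\binom{b_j}{2}] = m\,\mbox{E}[\binom{b_1}{2}]$, and conditional on $y=k$ the set $\{s_1,\ldots,s_k\}$ is uniform among the $k$-subsets of $\{1,\ldots,n\}$, so $b_1$ is hypergeometric with parameters $(n,m,k)$ and $\mbox{E}[b_1(b_1-1)\mid y=k] = k(k-1)m(m-1)/(n(n-1))$. Averaging over $k$ brings in $\sum_{k}p_k k(k-1) = R_2-R_1$, and with $n=m^2$ the prefactor $m\cdot m(m-1)/(2n(n-1))$ collapses to $1/(2(m+1))$. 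Combining with Lemma \ref{lemma:6} (which gives $R_2=2n-R_1$) yields $\mbox{E}[\sum_{j}\binom{b_j}{2}] = (R_2-R_1)/(2(m+1)) = (n-R_1)/(m+1)$.

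Finally I would substitute the explicit form of $R_1$ from Lemma \ref{lemma:7} and the explicit form of $\mbox{E}\{f\}$ from Lemma \ref{lemma:15}, place both summands over the common denominator $\sqrt n+1$, peel off the dominant term $\sqrt n$ together with the constant $\frac{1}{3}-\sqrt{\pi/8}$, and collect the remaining $\kappa$- and rational content over $\sqrt n+1$ to recover $\rho$. The monotone decay of $\rho$ to zero then follows directly from the monotone decay of $\kappa$ proved inside Lemma \ref{lemma:7}, exactly as in the proof of Theorem \ref{theorem:12}, since every other ingredient of $\rho$ is rational in $\sqrt n$. The main difficulty is the algebraic bookkeeping in this last step: the two contributions each generate their own $\sqrt{\pi/8}/(\sqrt n+1)$ and $\kappa/(\sqrt n+1)$ residuals that must cancel and regroup to produce the particular coefficients $\sqrt{9\pi/8}=3\sqrt{\pi/8}$ and $3\kappa/2$ of $\rho$; a secondary subtlety is the $p_n=n!/n^n$ tail correction at $y=n$ from Lemma \ref{lemma:15}, which is of order $\sqrt n/e^n$ and can be absorbed into $\rho$ without changing its sign pattern or rate of decay.
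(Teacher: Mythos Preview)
Your proposal is correct and follows essentially the same route as the paper's own proof: the paper likewise decomposes $C_B = m\,\mbox{E}\{c_1\} + \mbox{E}\{f\}$ with $c_1=\binom{b_1}{2}$, computes the first term via the pairwise inclusion probability $\binom{n-2}{k-2}/\binom{n}{k}=k(k-1)/(n(n-1))$ (which is exactly your hypergeometric second factorial moment), reduces it to $(R_2-R_1)/(2(\sqrt n+1))=(n-R_1)/(\sqrt n+1)$, and then substitutes Lemmas \ref{lemma:6}, \ref{lemma:7}, and \ref{lemma:15} before carrying out the same algebraic regrouping you outline. Your treatment of the $p_n$ tail matches the paper's level of rigor as well.
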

\begin{proof} Let \textbf{s} = $(s_1, \ s_2, \ \ldots, \ s_n)$ be the input sequence of the
algorithm \textsc{Bucket}. \textsc{Bucket} processes the input sequence using 
$m = \sqrt{n}$ buckets $B_1, \ B_2,$ \linebreak 
\noindent $\ldots, B_n$: it investigates the input elements sequentially and 
if the $i$-th input element $s_i$ belongs to the interval $[(r - 1)m + 1, (r - 1)m +2, \ldots, \ rm]$, then it 
sequentially compares $s_i$ with the elements in the bucket $B_r$ and finishes, if it finds a collision, or puts $s_i$ 
into $B_r$, if $s_i$ differs from all elements in $B_r$.    

Let $y$ be the random variable, defined in (\ref{equation31:43}), and $p_k$ the probability defined in (\ref{equation2:16}).
Let $b_i$ be the random variable defined in Lemma \ref{lemma:14}, and $c_j(n) = c_j \ (j = 1, \ 2, \ \ldots, \ m)$ be a 
random variable characterising the number of comparisons executed in $B_j$ before the processing of the first repeated element, 
and $c(n) = c$ a random variable characterising the number of necessary comparisons executed totally by \textsc{Bucket}. 
Then due to the symmetry we have 
\begin{equation}
C_B = \hbox{E} \left \{ \sum_{j=1}^m c_j \right \} + E\{ f \} 
=  m\hbox{E} \{c_1\} + E\{ f \} \koz. \label{equation33:88}
\end{equation}

The probability of the event $A(i_1,i_2,k,n) = A(i_1,i_2,k)$ that the elements $i_1$ and 
$i_2 \ (1 \leq i_1, \ i_2 \leq m)$ will be compared 
before the processing of the first repeated element at the condition that $y = k$ and $2 \leq k \leq n$ equals to 
$$
\hbox{Pr} \{ A(i_1,i_2,k) | y = k \mbox{ and } 2 \leq k \leq n \} = 
\frac{\binom{n-2}{k-2}}{\binom{n}{k}} = \frac{k(k - 1)}{n(n - 1)} \koz ,\label{equation33:89}
$$

Since there are $\binom{m}{n}$ possible comparisons among the elements of the interval $[1,m],$ we have
$$
E\{ c_1 \} = \sum _{k=1}^{n} p_k \frac{k(k-1)}{n(n-1)} \binom{m}{2} 
= \frac{m(m-1)}{2n(n-1)}\left(\sum _{k=1}^{n} p_k k^2 - \sum _{k=1}^{n} p_k k \right) \koz,\label{equation33:90} 
$$
from where using Lemma \ref{lemma:7} and Lemma \ref{lemma:8} we get
\begin{equation}
E\{ c_1 \} = \frac{n - \sqrt{n}}{2n^2 -2n} (R_2 - R_1) = 
\frac{1}{2n + 2\sqrt{n}} \left[2n - 2\left(\sqrt{\frac{\pi n}{2}} - \frac{1}{3} + \kappa \right)\right] \koz.\label{equation33:91}
\end{equation}

This equality implies
\begin{equation}
E\{ c_1 \} = 1 - \frac{1}{\sqrt{n} + 1} \left(\sqrt{\frac{\pi}{8}} + \frac{2}{3} - \kappa \right) \koz.\label{equation33:92}
\end{equation}

From (\ref{equation33:88}), taking into account (\ref{equation33:92}), (\ref{equation33:82}), and (\ref{equation33:85})  we get 
$$
C_B = \sqrt{n} + \frac{1}{3} - \sqrt{\frac{\pi}{8}} +  \frac{\sqrt{9\pi/8}  
+ 5/6 - 3\kappa/2}{\sqrt{n}+1} \koz .\label{equation33:93}
$$
Denoting the last fraction by $\rho$ we get the required (\ref{equation33:86}). The monotonity of $\rho$ 
is the consequence of the monotonity of $\kappa$.
\end{proof}

\begin{theorem} The expected\label{theorem:17} running time $T_{exp}(n,\textsc{Bucket}) = T_B$ of \textsc{Bucket} is 
\begin{equation}
T_B = \sqrt{n}\left (3 + 3 \sqrt{\frac{\pi}{2}} \right) + \sqrt\frac{25\pi}{8} + \phi \koz, \label{equation33:94}
\end{equation}
where 
$$
\phi(n) = \phi = 3\kappa - \rho - 3 \eta - \frac{n!}{n^n} - \frac{3\sqrt{\pi/8} - 1/3  
- 3\kappa/2}{\sqrt{n} + 1} \koz, \label{equation33:95}
$$
and $\phi$ tends to zero when $n$ tends to infinity.
\end{theorem}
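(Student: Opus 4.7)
The approach parallels Theorems \ref{theorem:10} and \ref{theorem:13}: decompose $T_B = A_B + C_B$, use Theorem \ref{theorem:16} for $C_B$, and reduce the problem to computing the expected assignment count $A_B$. All the non-trivial probabilistic machinery---on bucket occupancies via Lemma \ref{lemma:14}, on the collision-moment comparisons via Lemma \ref{lemma:15}, and on the bucket-interior comparison count behind $\rho$---has already been developed, so no fundamentally new random-variable analysis is required.

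To compute $A_B$, I would tabulate the per-line assignment contributions of \textsc{Bucket}. The initialisation block (lines 1, 2, and 4) contributes a constant plus $m = \sqrt{n}$ assignments independently of the input. Each ``safe'' iteration of the outer loop produces assignments in lines 6, 11, and 12, while each ``collision'' iteration (if one occurs) produces assignments in lines 6 and 9. Conditioning on the collision-free prefix length $y$ defined in (\ref{equation31:43}), whose distribution is $\hbox{Pr}\{y = k\} = p_k$, the expected per-iteration sum collapses to $3\sum_{k=1}^{n} k\, p_k = 3R_1$ together with fixed constants and a good-case correction of order $n!/n^n$. Substituting $R_1 = \sqrt{\pi n/2} - 1/3 + \kappa$ from Lemma \ref{lemma:7} yields $A_B$ in closed form, carrying the leading contribution $3\sqrt{\pi n/2}$ together with a $3\kappa$ residual and the usual small-order corrections.

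Adding $C_B = \sqrt{n} + 1/3 - \sqrt{\pi/8} + \rho$ from Theorem \ref{theorem:16} then produces the leading behaviour $\sqrt{n}(3 + 3\sqrt{\pi/2}) + 5\sqrt{\pi/8}$ claimed in the statement, using $5\sqrt{\pi/8} = \sqrt{25\pi/8}$. Every remaining term---the $3\kappa$ from $3R_1$, the residual $\rho$ from $C_B$, the $\eta$-contribution inherited from Lemma \ref{lemma:15} through the collision-moment comparisons, the $n!/n^n$ good-case correction, and the $(\sqrt{n} + 1)^{-1}$-scaled remainder arising from the denominators of $\rho$ and $\eta$---is then rearranged into the explicit formula for $\phi$ displayed in the statement.

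The main obstacle is the careful bookkeeping rather than any new analytic idea: one must cleanly separate the collision-moment error (which already carries a factor $\kappa/2$ through Lemma \ref{lemma:15}) from the per-iteration $3\kappa$ coming from $3R_1$, and combine the various $(\sqrt{n} + 1)^{-1}$-scaled residuals without double counting. Convergence $\phi \to 0$ is then immediate from the monotone convergence of $\kappa$ (Lemma \ref{lemma:7}), $\rho$ (Theorem \ref{theorem:16}), and $\eta$ (Lemma \ref{lemma:15}), together with $n!/n^n \to 0$ via Stirling's formula (Lemma \ref{lemma:1}); the bracketed $(\sqrt{n}+1)^{-1}$ remainder vanishes at rate $O(1/\sqrt{n})$.
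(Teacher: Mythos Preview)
Your overall approach---decompose $T_B = A_B + C_B$, count assignments line by line, substitute the closed forms for $R_1$, $C_B$, and $E\{f\}$, and collect the residuals into $\phi$---is exactly what the paper does. However, your specific tally of assignments has a gap that would produce the wrong leading coefficient.

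You list only lines 6, 11, 12 (safe iterations) and lines 6, 9 (the collision iteration) as contributing assignments inside the main loop. That tally gives $A_B = \sqrt{n} + 3R_1 + O(1)$, and adding $C_B = \sqrt{n} + O(1)$ yields
\[
T_B \;=\; 2\sqrt{n} + 3\sqrt{\tfrac{\pi n}{2}} + O(1) \;=\; \sqrt{n}\Bigl(2 + 3\sqrt{\tfrac{\pi}{2}}\Bigr) + O(1),
\]
not $\sqrt{n}\bigl(3 + 3\sqrt{\pi/2}\bigr)$ as the theorem asserts. A further $\sqrt{n}$ worth of assignments is missing. In the paper's accounting, this comes from attributing $C_B + E\{f\}$ assignments to line~8 and $2E\{f\} - 1$ further assignments to the processing after the first repeated element, so that
\[
A_B \;=\; 2 + \sqrt{n} + 3R_1 + C_B + 3E\{f\} - \frac{n!}{n^n}
\]
rather than your $A_B \approx \sqrt{n} + 3R_1 + \text{const}$. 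The embedded $C_B$ supplies the missing $\sqrt{n}$, and the $3E\{f\}$ is precisely where the $-3\eta$ in $\phi$ originates. You do anticipate an $\eta$-contribution, but you attribute it to ``collision-moment comparisons'' rather than to the assignment count; in the paper it enters $A_B$ directly via $3E\{f\}$, and without that term neither the leading $3\sqrt{n}$ nor the displayed formula for $\phi$ can be recovered. Once you include this additional block of operations (effectively the cost of the inner-loop traversal of the bucket), the remainder of your plan goes through verbatim.
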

\begin{proof}
\textsc{Bucket} requires 2 assignments in lines 1 and 2, $\sqrt{n}$ assignments in line 4, $R_1$ assignments in line 
6, $C_B + E\{ f \}$ assignments in line 8, $1 - p_n$ expected assignment in line 9 and $2R_1$ assignments in lines 
11 and 12 before the first repeated element, and $2\mbox{E}\{ f \} - 1$ assignments after the first repeated element. 

Therefore the expected number $A_{exp}(n,\textsc{Bucket}) = A_B$ of assignments of \textsc{Bucket} is 
$$
A_B =  2 + \sqrt{n} + 3R_1 + C_B + 3 E\{f \} - \frac{n!}{n^n} \koz .\label{equation33:96}
$$
Substituting $R_1,$ and $C_B,$ and $E\{f\}$ we get
\begin{equation}
A_B = 2 \sqrt{n} + \frac{13}{3} + 3 \sqrt{\frac{\pi n}{2}} + 3 \kappa   
- \sqrt{\frac{\pi}{8}} + \rho + 3 \sqrt{\frac{\pi}{8}} - 3 \eta  - \frac{n!}{n^n} \koz , \label{equation33:97}
\end{equation}
implying
$$
A_B = \sqrt{n} \left ( 2 + 3 \sqrt{\frac{\pi}{2}} \right ) + \frac{13}{3} + \sqrt{\frac{\pi}{2}} + 3\kappa + \rho - 3 \eta  
- \frac{n!}{n^n} \koz  . \label{equation33:98}
$$

Summing up the expected number of comparisons in (\ref{equation33:86}) and of assignments in (\ref{equation33:97}) 
we get the final formula (\ref{equation33:94}).
\end{proof}

\subsection{Test of random arrays \label{subsection:34}}
\textsc{Matrix} is based on \textsc{Bucket}.

For the simplicity let us suppose that $n$ is a square.

Let $\mathcal{M}$ be an $n \times n$ sized matrix, where $m_{ij} \in \{1,2,\ldots,n\}$. 
The $i$th row of $\mathcal{M}$ is denoted by $r_i$, and the $j$th column  
by $c_j$ for $1 \leq i, j \leq n$. The matrix $M$ is called \textit{good,} if its 
all lines (rows and columns) contain a permutation of the elements $1, \ 2, \ \ldots, \ n$.

\newpage
\noindent \textsc{Matrix}$(n,\mathcal{M})$
\vspace{-2mm}
\begin{tabbing}%
199 \= xxx\=xxx\=xxx\=xxx\=xxx\=xxx\=xxx\=xxx \+ \kill 
\hspace{-7mm}\textbf{1} $g \leftarrow$ \textsc{True} \\
\hspace{-7mm}\textbf{2} \textsc{Bucket}$(n,r_1)$ \\
\hspace{-7mm}\textbf{3} \textbf{if} \= $g = \textsc{False}$ \\             
\hspace{-7mm}\textbf{4}             \> \textbf{return} $g$\\
\hspace{-7mm}\textbf{5} \textbf{for} \= $i \leftarrow 2$ \textbf{to} $n$ \\
\hspace{-7mm}\textbf{6}              \> \textsc{Bucket}$(n,r_i)$ \\
\hspace{-7mm}\textbf{7}              \> \textbf{if} \= $g = \textsc{False}$ \\
\hspace{-7mm}\textbf{8}              \>             \> \textbf{return} $g$ \\
\hspace{-7mm}\textbf{9} \textbf{for} \= $j \leftarrow 1$ \textbf{to} $n$ \\
\hspace{-7mm}\textbf{10}             \> \textsc{Bucket}$(n,c_j)$ \\
\hspace{-7mm}\textbf{11}             \> \textbf{if} \= $g = \textsc{False}$ \\
\hspace{-7mm}\textbf{12}             \>             \> \textbf{return} $g$ \\ 
\hspace{-7mm}\textbf{13} \textbf{return} $g$
\end{tabbing}

\begin{theorem} The\label{theorem:18} expected running time $T_{exp}(n,\textsc{Matrix}) = T_M$ of \textsc{Matrix} is 
\begin{equation}T_M = T_B + o(1) \koz. \label{equation34:99}
\end{equation}
\end{theorem}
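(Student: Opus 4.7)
The plan is to decompose $T_M$ as the time spent in the first BUCKET call plus the contributions of later calls that are reached only conditionally, and to show that those later contributions sum to $o(1)$. Let $p = n!/n^n$ denote the probability that a uniformly distributed length-$n$ sequence over $\{1,2,\ldots,n\}$ is a permutation; by Lemma~\ref{lemma:1} (Stirling) we have $p = \Theta(\sqrt{n}/e^n)$, which is super-exponentially small.

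First I would observe that the $n$ rows $r_1, \ldots, r_n$ of $\mathcal{M}$ are mutually independent and each is uniformly distributed on $\{1,\ldots,n\}^n$, since the $n^2$ entries $m_{ij}$ are i.i.d.\ uniform. Hence the expected time of the BUCKET call in line~2 equals $T_B$ by Theorem~\ref{theorem:17}, and the expected time of the call to BUCKET on $r_k$ in line~6 for $k \geq 2$, \emph{conditional on being executed}, is still $T_B$, because $r_k$ is independent of the event ``$r_1, \ldots, r_{k-1}$ are all permutations''. The probability of the latter is $p^{k-1}$, so summing over $k = 1,\ldots,n$ the total row-processing contribution is $T_B \sum_{k=0}^{n-1} p^k \leq T_B/(1-p) = T_B (1 + O(p))$. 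Since $T_B = \Theta(\sqrt{n})$ and $p = O(\sqrt{n}/e^n)$, the excess over $T_B$ is $O(n/e^n) = o(1)$.

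For the column block (lines 9--12), the algorithm enters it only if all $n$ rows pass BUCKET, an event of probability $p^n$, which decays super-exponentially. Each column-BUCKET call takes at most $T_{worst}(n,\textsc{Bucket}) = O(n^{3/2})$ time (from the worst-case analysis preceding Lemma~\ref{lemma:14}), so the total expected column contribution is bounded by $n \cdot p^n \cdot O(n^{3/2}) = o(1)$ trivially. The bookkeeping overhead of the \textbf{for} loops and \textbf{if} tests in lines 3--8 is $O(n)$ in the worst case, but these loop bodies are reached only after the first row passes (probability $p$) or after all rows pass (probability $p^n$), so their expected cost is $O(n \cdot p) = o(1)$ as well. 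Combining the row contribution, the column contribution, and the bookkeeping yields $T_M = T_B + o(1)$.

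The main obstacle I anticipate is keeping the conditional-distribution arguments clean: conditional on reaching the $k$-th row-BUCKET call, the input $r_k$ must still be uniform for Theorem~\ref{theorem:17} to apply directly, and this follows from independence of rows. The conditional distribution of the columns given that all rows are permutations is more subtle (columns are not independent in that scenario), but the bound above circumvents this entirely by using only the worst-case running time of BUCKET together with the vanishingly small probability $p^n$ of reaching the column block, so no refined analysis of the column distribution is needed.
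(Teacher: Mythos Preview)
Your argument is correct and follows essentially the same decomposition as the paper's proof: bound the row contribution by $T_B$ times a geometric series in $p=n!/n^n$, and show the column block contributes $o(1)$ via the tiny probability $p^{n}$ of ever reaching it. Your column argument is in fact cleaner than the paper's---by invoking the worst-case bound $T_{worst}(n,\textsc{Bucket})=O(n^{3/2})$ you sidestep any analysis of the conditional column distribution, whereas the paper's assertion that the joint probability of all rows and the first $k$ columns being good equals $p^{\,n+k}$ tacitly assumes an independence between row-good and column-good events that does not hold in general.
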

\begin{proof} According to Theorem \ref{theorem:17} we have
$$
T_B = \sqrt{n}\left (3 + 3 \sqrt{\frac{\pi}{2}} \right) + \sqrt\frac{25\pi}{8} + o(1) \koz.\label{equation34:100}
$$
Since the rows of $\mathcal{M}$ are independent, therefore the probability of the event $G_k(n) = G_k \ (k = 1, \ 2, \ldots, n)$ 
that the first $k$ rows are good is
$$
\hbox{Pr} \{G_k \} = \left ( \frac{n!}{n^n} \right )^k \koz,\label{equation4:101}
$$
so for the expected time $T_{exp}(n,\textsc{Matrix}) = T_R$ of the testing of the rows we have
$$
T_R \leq T_B + T_B\sum _{k=1}^{n-1} \left (\frac{n!}{n^n} \right )^k = T_B + o(1) \koz.\label{equation34:102}
$$

Since the columns are also independent, all the rows and the first $k$ columns are good with the probability
$$
p = \left ( \frac{n!}{n^n} \right )^{n + k} \koz, \label{equation34:103}
$$
and so for the expected time of testing of the columns $T_{exp}(n,\textsc{Matrix}) = T_C$ holds
$$
T_C \leq T_B \sum _{k=0}^{n-1} \left ( \frac{n!}{n^n} \right ) = o(1) \koz, \label{equation34:104}
$$
and so
$$
T_M = T_R + T_C\label{equation34:105}
$$
implies (\ref{equation34:99}).
\end{proof}

\section{Summary \label{section:4}}
Table \ref{table:4} summarises the basic properties of the number of necessary comparisons of the investigated algorithms.

\begin{table}[!t] 
\begin{center}
\begin{tabular}{|l|c|c|c|c|c|c|c|c|}    \hline
Index and algorithm   &  $C_{best}(n)$        &  $C_{worst}(n)$      & $C_{exp}(n)$  \\ \hline
1. \textsc{Linear}    &   $\Theta(1)$         & $\Theta(n)$          & $\Theta(\sqrt{n})$         \\ \hline
2. \textsc{Backward}  &   $\Theta(1)$         & $\Theta(n^2)$        & $\Theta(n)$          \\ \hline
3. \textsc{Bucket}    &   $\Theta(1)$         & $\Theta(n \sqrt{n})$ & $\Theta(\sqrt{n})$ \\ \hline
4. \textsc{Matrix}    &   $\Theta(1)$         & $\Theta(n \sqrt{n})$ & $\Theta(\sqrt {n})$ \\ \hline
\end{tabular}
\caption{The expected number of comparisons of the investigated algorithms in best, worst and expected cases \label{table:4}}
\end{center}
\end{table}

Table \ref{table:5} summarises the basic properties of the running times of the investigated algorithms.

\begin{table}[!t] 
\begin{center}
\begin{tabular}{|l|c|c|c|c|c|c|c|c|}    \hline
Index and algorithm   &  $T_{best}(n)$   &  $T_{worst}(n)$ & $T_{exp}(n)$  \\ \hline
1. \textsc{Linear}    &   $\Theta(n)$           &       $\Theta(n)$      & $n + \Theta(\sqrt{n})$         \\ \hline
2. \textsc{Backward}  &   $\Theta(1)$           &       $\Theta(n^2)$    & $\Theta(n)$          \\ \hline
3. \textsc{Bucket}    &   $\Theta(\sqrt{n})$    &   $\Theta(n \sqrt{n})$ &   $\Theta(\sqrt {n})$ \\ \hline
4. \textsc{Matrix}    &   $\Theta(\sqrt{n})$           &   $\Theta(n \sqrt{n})$ &   $\Theta(\sqrt {n})$ \\ \hline
\end{tabular}
\caption{The running times of the investigated algorithms in best, worst and expected cases \label{table:5}}
\end{center}
\end{table}

We used in our calculations the RAM computation model \cite{CormenLe2009}. If the investigated 
algorithms run on real computers then we have to take into account also the limited capacity of the memory 
locations and the increasing execution time of the elementary arithmetical and logical operations. 

\section*{Acknowledgements} Authors thank Tam\'as F. M\'ori \cite{Mori2010} for proving Lemmas \ref{lemma:14} 
and  \ref{lemma:15}, P\'eter Burcsi \cite{Burcsi2009} for useful information 
on references (both are teachers of E\"otv\"os Lor\'and University) and the unknown referee for the 
useful corrections. \\

The European Union and the European Social Fund have provided financial support to the project 
under the grant agreement no. T\'AMOP 4.2.1/B-09/1/KMR-2010-0003.

\bigskip
\rightline{\emph{Received:  January 11, 2011 {\tiny \raisebox{2pt}{$\bullet$\!}} Revised: April 5, 2011}} 

\end{document}